\newcommand{\bb}[1]{\boldsymbol{#1}}
\theoremstyle{plain}
\newtheorem{Remark}{Remark}
\newtheorem{Theorem}{Theorem}
\newtheorem{Lemma}{Lemma}
\newtheorem{Proposition}{Proposition}
\newtheorem{Corollary}{Corollary}
\newtheorem{Definition}{Definition}
\DeclareMathOperator\arctanh{arctanh}
\title{PDE/statistical mechanics duality: relation between Guerra's interpolated $p$-spin ferromagnets and the Burgers hierarchy}
\author[a,b,c]{Alberto Fachechi}
\affiliation[a]{Dipartimento di Matematica e Fisica ``Ennio De Giorgi'', Universit\`a del Salento, Italy}
\affiliation[b]{GNFM-INdAM, Gruppo Nazionale di Fisica Matematica (Istituto Nazionale di Alta Matematica), Italy}
\affiliation[c]{INFN, Istituto Nazionale di Fisica Nucleare, Italy}
\emailAdd{alberto.fachechi@le.infn.it}
\abstract{We examine the duality relating the equilibrium dynamics of the mean-field $p$-spin ferromagnets at finite size in the Guerra's interpolation scheme and the Burgers hierarchy. In particular, we prove that - for fixed $p$ - the expectation value of the order parameter on the first side w.r.t. the generalized partition function satisfies the $p-1$-th element in the aforementioned class of nonlinear equations. In the light of this duality, we interpret the phase transitions in the thermodynamic limit of the statistical mechanics model with the development of shock waves in the PDE side. We also obtain the solutions for the $p$-spin ferromagnets at fixed $N$, allowing us to easily generate specific solutions of the corresponding equation in the Burgers hierarchy. Finally, we obtain an effective description of the finite $N$ equilibrium dynamics of the $p=2$ model with some standard tools in PDE side.}
\keywords{$p$-spin, Statistical mechanics, Burgers hierarchy, Nonlinear systems, Mean-field theory, PDE}
\begin{document}
\maketitle

\section{Introduction}

During the last decades, Statistical Mechanics of disordered systems has acquired a prominent role in describing complex phenomena and emerging properties of system with highly non-trivial dynamics. In particular, one of the major success is the possibility to analyze the relaxation and the equilibrium dynamics of spin-glass models \cite{MPV,SK}, {\it i.e.} systems of simple {degrees of freedom} (the spins, adopting the physics jargon) whose interaction are frustrated (or, in other words, {spins} in the system can compete with each others) and known only through their probability distribution. {Random and frustrated interactions} are responsible for a rich phenomenology ({\it e.g.} the existence of multiple timescales for the relaxation toward the equilibrium), which is due to the complex structure of the free-energy landscape (see for example \cite{MPV}). The analysis of spin-glass equilibrium dynamics is commonly accomplished by computing the so-called {\it quenched} free-energy\footnote{With quenched free-energy, we mean {the expectation value w.r.t. to the intrinsic disorder (the couplings) is performed after taking the logarithm, {\it i.e.} the intensive free energy is $f_N^{Q}(\beta)=-(\beta N)^{-1}\mathbb E \log Z_N(\beta)$, to be compared with its {\it annealed} counterpart, {\it i.e. }$f_N^{A}(\beta)=-(\beta N)^{-1} \log\mathbb E Z_N(\beta)$. Thus, the disorder in the two versions of the free energy is treated in a substantial different way. The relevant quantity for spin-glass models is the former, since it deals with the equilibrium dynamics of the system at fixed couplings realization.}} in the thermodynamic limit (whose analysis is motivated by its self-averaging property \cite{guerraton2,guerra3}) in terms of few order parameters (in particular, the so-called Edward-Anderson overlap \cite{ea}) and tracing the phase diagram in the parameter space. Remarkably, Statistical Mechanics of disordered systems has gained a renewed interest with the advent of Deep Learning \cite{lecun}, since it offers the ideal tool for investigating information processing in neural networks (see for example \cite{PhysRevLett,Agliari_2019,ags,BARRA2018205,Guerra2,Coolen,offequilibrium1,FACHECHI201924,Hopfield2554,Personnaz,KanterSompo}). In particular, Statistical Mechanics commonly deals with the Machine Retrieval regime, {\it i.e.} the long-term relaxation of neural networks which are fed with some inputs to be classified according to some previously learnt features. Focusing on the methods, the solution of such complex systems can be found by means of the straightforward (but unfortunately non-rigorous) {\it replica trick} approach. At the same time, rigorous and mathematically transparent approaches can be developed, see for example \cite{contucci1,gg,guerra1,guerraton,panchenko1,panchenko2,panchenko3,Tala2,Tala1,Tala}. For our concerns, the most important one is the Guerra's interpolation scheme, which makes use of sum rules for the computation of relevant quantities \cite{Agliari_2019,Agliari_2019_1,AGLIARI2020254,Alemanno_2020,guerra2,barrareplica} or taking benefit of mechanical analogy or relation with PDEs \cite{linda,pspin,BARRA2018205,barrapde1,Barra-Guerra-HJ,barrapde2}. In the latter approach, the quenched free-energy can be interpolated in an effective mean field scheme, in which the interpolation parameters can be interpreted as coordinates of a fictious spacetime $(t,x)$. In this case, it is possible to use the entire set of PDE technology in order to solve the thermodynamics of the model, also for finite $N$. Even if such interpolation methods were initially developed in the context of spin-glass, they are quite general, and by this they can be easily brought to similar (and possibily simpler) models, such as $p$-spin ferromagnets. The latters are spin systems in which the {spins} interact {with coupling of order $p$} and cooperatively ({\it i.e.} they are not spin-glasses). In this paper, we study the relation of Guerra interpolated partition functions of $p$-spin ferromagnets and prove that the expectation values of the order parameter (the global magnetization) w.r.t. to the associated Boltzmann-Gibbs measure is in 1-1 correspondence with the elements of the Burgers hierarchy \cite{KUDRYASHOV20091293,olver,Sharma,tasso1976cole}. This duality can be explored in both sense: for instance, we can generate specific solutions of the Burgers hierarchy by exploiting the finite size solution of the $p$-spin systems; on the other hand, we can achieve informations about the thermodynamics of the ferromagnets from the PDE side.
\par\medskip
Despite being interesting by itself, this duality could be a promising tool for investigating the principles of information processing in biological phenomena. Indeed, as firstly suggested in \cite{thompson}, Statistical Mechanics turned out to be an effective tool in the description of universal phenomena in biological systems, see also \cite{Hopfield625,Amit,aglmultitasking,Butler11833,Mora5405,Agliari_2013,STANLEY1994214,PhysRevLett1}. For instance, kinetics of biochemical reactions can be framed in a purely statistical mechanics picture in terms of the Curie-Weiss ferromagnetic model, see \cite{cwkinetic1,cwkinetic2}. However, since in standard statistical mechanics one considers the thermodynamic limit $N\to \infty$, this scenario is a good approximation only for systems with large size. As opposite to this picture, recently the research in biochemical information processing has focused on phenomena involving system with small size \cite{felix}, for which the statistical mechanics scenario loses its predictive power. Thus, for such practical application one has to consider the thermodynamics at finite size $N$. On the other side, it has been established \cite{multisite} that the binding of {spins} in systems such as long chain molecules (such as {\it proteins}) can takes place with multisite interactions ({\it i.e.} $p>2$) beyond the pairwise scenario, see also \cite{dibiasio}. In this picture, our duality can thus provide rigorous mathematical tools to investigate these interesting biological situations.
\par\medskip
The paper is organized as follows. In Section \ref{sec:2}, we provide the basic tools of statistical mechanics for $p$-spin ferromagnetic models, also discussing the thermodynamic solution of the free-energy (both with physical and mathematically rigorous methods, {\it i.e.} Guerra's interpolation scheme). In Section \ref{sec:3}, we provide the fundamental notions about the Burgers hierarchy and the reduction to linear PDEs through Cole-Hopf transforms. In Section \ref{sec:4}, we prove the duality between the two sides and some results (such as the interpretation of phase transition in the ferromagnets as the development of shock waves in the PDE side and the search of solutions of the Burgers hierarchy by means of finite size solutions). We also give a description of $p=2$ ferromagnetic model with the tools offered by the Burgers equation.

\section{{A cursory look at mean-field $p$-spin ferromagnetic models}}\label{sec:2}
In this Section, we will introduce the fundamental notions about the $p$-spin ferromagnetic models. In particular, we will discuss the solution of the model, both with purely physics arguments and Guerra's interpolation schemes. We will also discuss the existence of the thermodynamic limit for the free-energy of the models. Let us start by introducing the following

{
\begin{Definition}
Let $\bb{\sigma}\in \Omega_N =\{-1,+1\}^N$ be a general configuration of the system at finite size $N$, and let $J>0$ be the interaction strength. The Hamilton function of the $p$-spin ferromagnetic model is defined as
\begin{equation}
H_N (\bb \sigma\vert J)=-\frac J{N^{p-1}} \sum_{i_1,i_2,\dots,i_p=1}^N \sigma_{i_1}\sigma_{i_2}\dots \sigma_{i_p}.
\end{equation}
\end{Definition}
}
{
\begin{Remark}
The normalization factor $1/N^{p-1}$ is inserted in order to ensure the linear extensivity of the Hamilton function, meaning that the energy of the system scales linearly in its volume:
$$
H_N(\bb \sigma \vert J)= N \epsilon_N (\bb \sigma \vert J),
$$
where $\epsilon_N(\bb \sigma)$ (the energy per site of the model associated to the configuration $\bb\sigma$) is finite in the infinite-size limit $N\to\infty$.
\end{Remark}
}
{
\begin{Remark}
Traditionally, the Hamilton function of a $p$-spin ferromagnetic model is given by
$$
H'_N (\bb \sigma \vert J)=-\frac{J}{N^{p-1}}\sum_{1\le i_1<i_2 < \dots < i_p \le N}\sigma_{i_1}\sigma_{i_2}\dots \sigma_{i_p}.
$$
However, the difference between the two formulations is negligible in the thermodynamic limit. Indeed, by noticing that
$$
\sum_{i_1,i_2,\dots,i_p=1}^N \equiv p! \sum_{1\le i_1<i_2 < \dots < i_p \le N},
$$
holding in $N\to\infty$ limit (see for example \cite{crisanti_pspin}), it is easy to see that
$$
\frac{H'_N (\bb \sigma\vert J)}N= \frac{  H_N (\bb \sigma\vert J)}{p ! N}+\text{corrections vanishing at } N\to\infty,
$$
thus the thermodynamics of the systems within the two frameworks are equivalent (and they differ only up to a trivial rescaling of the temperature by a factor $p!$).
\end{Remark}
}

{
\begin{Definition}
Let $\beta \in \mathbb R^+$ the level of thermal noise (i.e. the inverse temperature $\beta=T^{-1}$).
Then, the partition function of the ferromagnetic $p$-spin model is defined as
\begin{equation}\label{eq:2}
Z_N(\beta, J)=\sum_{\bb{\sigma}}\exp\big(-\beta  H_N (\bb\sigma \vert J)\big)\equiv \sum_{\bb{\sigma}}\exp\Big(\frac{\beta J}{N^{p-1}} \sum_{i_1,i_2,\dots,i_p=1}^N\sigma_{i_1}\sigma_{i_2}\dots \sigma_{i_p}\Big),
\end{equation}
where $\sum_{\bb \sigma} \equiv \sum_{\bb\sigma\in\Sigma_N}$ is the sum over all possible configurations of the system. The Boltzmann factor corresponding to the partition function \eqref{eq:2} is defined 
\begin{equation}
\label{eq:Bweight}
B_N (\bb \sigma)=\exp\Big(\frac{\beta J}{N^{p-1}} \sum_{i_1,i_2,\dots,i_p=1}^N\sigma_{i_1}\sigma_{i_2}\dots \sigma_{i_p}\Big),
\end{equation}
where of course $Z_N (\beta,J)=\sum_{\bb \sigma}B_N (\bb\sigma)$.
\end{Definition}
}

\begin{Definition}
The global magnetization of the ferromagnetic $p$-spin model is defined as
\begin{equation}\label{eq:1}
m_N{(\bb{\sigma})}=\frac1N\sum_{i=1}^N\sigma_i.
\end{equation}
\end{Definition}

\begin{Remark}
Since the model is ferromagnetic, this is the only order parameters we need to fully describe the equilibrium of the model.
\end{Remark}

{
\begin{Remark}
The expression of the Hamilton function in terms of the global magnetization is
\begin{equation}
\label{eq:HandM}
H_N (\bb \sigma\vert J)= -J N \Big(\frac{1}{N}\sum_{i=1}^N \sigma_i\Big)^p =-J N m_N (\bb \sigma)^p.
\end{equation}
\end{Remark}
}

\begin{Remark}
Notice that, as usual, the ferromagnetic strength $J$ has the only effect of rescaling the thermal noise in the {system}. Therefore, without loss of generality, we can set $J=1$. In this way, we will denote {$H_N(\bb\sigma)\equiv H_N(\bb\sigma\vert J=1)$ and} $Z_N(\beta)\equiv Z_N(\beta,J=1)$.
\end{Remark}

{
\begin{Definition}
Given a function $F(\bb{\sigma})$ of the spins in the {system}, its expectation value is defined as
\begin{equation}\label{eq:3}
\omega[F(\bb {\sigma})]=\frac{\sum_{\bb{\sigma}}F(\bb {\sigma})B_N(\bb \sigma)}{Z_N (\beta)}.
\end{equation}
\end{Definition}
}

\begin{Remark}
We stress that, for odd $p$, the Hamilton function is not invariant under gauge transformations $\bb {\sigma}\to -\bb {\sigma}$. This means that, as opposite to the even $p$ cases, there are no gauge-equivalent solutions.\end{Remark}

\begin{Definition}
The intensive statistical pressure $A_N (\beta)$ of the system is defined as
\begin{equation}
\label{eq:5}
A_N (\beta)=\frac1N\log Z_N(\beta).
\end{equation}
\end{Definition}

{
\begin{Remark}
The intensive statistical pressure is related to the usual (intensive) Helmholtz free energy $f_N(\beta)$ as $A_N(\beta)=-\beta f_N(\beta)$. Let us denote with $P(\bb \sigma)$ the Boltzmann-Gibbs distribution of the system, which is of course defined as
$$
P (\bb \sigma)= \frac{1}{Z_N (\beta)}\exp(-\beta H_N (\bb \sigma)).
$$
Given the energy per site at fixed system configuration
$$
\epsilon_N(\bb \sigma)= \frac{H_N (\bb \sigma)}{N},
$$
and
$$
s_N (\bb \sigma)=\frac{\log P(\bb \sigma)}{N},
$$
we can write down the equality
\begin{equation}
\label{eq:centralequality}
A_N(\beta)=\omega[s_N(\bb {\sigma})]-\beta \omega [\epsilon _N(\bb {\sigma})],
\end{equation}
which, a part for a factor $-\beta$, is the relation between the (intensive) Helmholtz free energy and the expectation values of the relevant thermodynamic observables for the system. Indeed, the first contribution is nothing but the entropy per site, since, according to the definition \eqref{eq:3}, we have
$$
\omega[s_N (\bb \sigma)]= -\frac 1N \sum_{\bb \sigma} P(\bb\sigma)\log P (\bb\sigma).
$$
Clearly, $S_N [P]=-\sum_{\bb \sigma} P(\bb\sigma)\log P (\bb \sigma)$ is the Shannon entropy associated to the probability distribution $P(\bb\sigma)$. Thus, Eq. \eqref{eq:centralequality} exactly paints the relation between the intensive pressure $A_N(\beta)$ and the intensive Helmoltz free energy $f_N(\beta)$.
\end{Remark}
}

For our concerns, it is relevant the statistical pressure in the thermodynamic limit
\begin{equation}
\label{eq:6}
A(\beta)=\lim_{N\to \infty}\frac1N\log Z_N(\beta).
\end{equation}
The solution of the $p$-spin ferromagnetic models can be straightforwardly obtained by purely statistical tools. The {expression of the expectation value of the energy per site follows directly from \eqref{eq:HandM}}, and reads
$$
\omega [\epsilon _N(\bb {\sigma})]=-\omega [m_N { (\bb {\sigma})}^p].
$$
Regarding the entropy per site, due to the mean-field nature of the model we make the assumption that the equilibrium probability distribution to observe the system in a given configuration can be factorized as product of probabilities of independent sites $P(\bb {\sigma})=\prod_{i=1}^N P(\sigma_i)$, and the spin orientation is driven by the global magnetization. Then, in the thermodynamic limit we can write
\begin{equation}
\label{eq:meanfieldansatz}
P(\sigma_i)=\frac{1+\bar m}{2}\delta_{\sigma_i,1}+\frac{1-\bar m}{2}\delta_{\sigma_i,-1},
\end{equation}
where $\bar m$ is the thermodynamic value of the magnetization:
$$
\bar m = \lim_{N\to \infty}\omega[m_N{ (\bb {\sigma})}].
$$

{
\begin{Remark}
When working at finite size $N$, the computation of the expectation value of the energy per site requires evaluating correlation functions of the form $\omega[\sigma_{i_1}\sigma_{i_2}\dots \sigma_{i_p}]$, which is in general a non-trivial task. However, due to the mean-field nature of the ferromagnetic model, simplifications occur in the thermodynamic limit. Indeed, as remarked above, in this limit the probability distribution in the configuration space factorizes, {\it i.e.} $P(\bb \sigma)=\prod_i P(\sigma_i)$, so that the evaluation of the aforementioned correlation functions becomes trivial. Likewise, it is possible to achieve the same result by assuming the self-averaging property for the global magnetization. In simple words, we require the fluctuations of the order parameter w.r.t. its equilibrium value $\bar m$ to vanish as $N\to \infty$. This is translated in mathematical terms by requiring that the probability distribution of the global magnetization converges to a Dirac-delta distribution which is centered around the equilbrium value, i.e.
$$
\lim_{N\to \infty} P_N (m_N(\bb\sigma))= \delta (m-\bar m),
$$
 $\beta$ almost everywhere, with $m$ being the global magnetization in the thermodynamic limit. The self-averaging assumption is a reasonable hypothesis for ferromagnetic systems, see for example \cite{barrapspin}, as the number of pure states does not depend on the system size. As a consequence, the expectation value of a general function $F$ of the global magnetization can be simply evaluated:
$$
\omega [F(m (\bb \sigma))]\equiv \int dm_N P_N (m _N) F(m_N)\underset{N\to \infty}\to \int dm \ \delta (m-\bar m) F(m)= F(\bar m). 
$$
\end{Remark}
}
{
The previous remark implies that the expectation value of the energy per site in the thermodynamic limit can be evaluated as
$$
\lim_{N\to \infty} \omega [\epsilon_N(\bb\sigma)]= -\bar m ^p.
$$
Regarding the entropy contribution, the mean-field assumption \eqref{eq:meanfieldansatz} allows us to discard correlations between the spins and then reduces the problem to one-body computations. Then, with this expression of the probability distribution we can write down the entropy using the Shannon prescription:
$$
\omega[s_N (\bb \sigma)]=-\frac1N\sum_{\bb {\sigma}} P(\bb {\sigma })\log P(\bb {\sigma})=-\left(\frac{1+\bar m}2\log \frac{1+\bar m}2+\frac{1-\bar m}2\log \frac{1-\bar m}2 \right),
$$
which is valid in the thermodynamic limit. Then, putting everything together we get}
\begin{equation}
\label{eq:7}
A(\beta)=\beta \bar m^p-\frac{1+\bar m}2\log \frac{1+\bar m}2-\frac{1-\bar m}2\log \frac{1-\bar m}2.
\end{equation}
By imposing the extremality condition for the {intensive} pressure $\partial_{\bar m}A(\beta)=0$, we have
\begin{equation}
\label{eq:8}
\beta p \bar m^{p-1}-\arctanh (\bar m)=0\Rightarrow \bar m = \tanh (\beta p \bar m^{p-1}),
\end{equation}
{which is exactly the self-consistency equations for the $p$-spin ferromagnetic model \cite{barrapspin}.}

\subsection{Existence of the thermodynamic limit}

In the previous Section, it was tacitly supposed the existence of the thermodynamic limit for the $p$-spin ferromagnetic models as described by the partition function \eqref{eq:2}. For the sake of simplicity, we rigorously prove the existence of the thermodynamic limit for even $p$ (however, this results can be carried out also for models with Hamilton functions which are polynomial in the global magnetization $m_N {(\bb {\sigma}) }$, therefore including also the odd $p$ case, see for example \cite{contucci}). The key idea is to separate the ferromagnetic models with $N$ interacting spins in two distinct non-interacting subsystems respectively with $N_1$ and $N_2$ spins, such that $N=N_1+N_2$. Then, we introduce an interpolating partition function with the following

\begin{Definition}
Let $\sigma_i =\pm 1$ for $i=1,\dots,N$ be the binary spins of the model and $t\in [0,1]$ an interpolating parameter. Let us build two separate, non-interacting subsystems by collecting respectively the spins $\sigma_i$ with $i=1,\dots, N_1$ and $\sigma_i$ for $i =N_1+1,\dots N_1+N_2$, with corresponding order parameters
\begin{equation}
m_{1}{(\bb {\sigma}) }=\frac1{N_1}\sum_{i=1}^{N_1}\sigma_i, \quad m_{2}{(\bb {\sigma}) }=\frac1{N_2}\sum_{i={N_1+1}}^{N_1+N_2}\sigma_i.
\end{equation}
Then, the interpolating partition function is defined as 
\begin{equation}
\label{eq:2.1.1}
Z_N(\beta, t)= \sum_{ \bb {\sigma}}\exp \big(t N \beta m_N  {(\bb {\sigma}) }^p+(1-t)N_1 \beta m_{1}{(\bb {\sigma}) }^p+ (1-t)N_2\beta m_{2}{(\bb {\sigma}) }^p \big).
\end{equation}
{The Boltzmann factor associated to the partition function \eqref{eq:2.1.1} is
$$
B_N(\bb \sigma,t)=\exp \big(t N \beta m_N  {(\bb {\sigma}) }^p+(1-t)N_1 \beta m_{1}{(\bb {\sigma}) }^p+ (1-t)N_2\beta m_{2}{(\bb {\sigma}) }^p \big),
$$
so that $Z_N(\beta,t)=\sum_{\bb\sigma} B_{N}(\bb\sigma,t)$. The statistical pressure of the model is defined as
\begin{equation}
A_N (\beta,t)= \frac1N \log Z_N(\beta,t).
\end{equation}
}
\end{Definition}
{
	\begin{Remark}
		We stress that, within the Guerra's interpolating framework, we do not divide the system of size $N$ in two subsystems with resp. $N_1$ and $N_2$ spins. Indeed, in order for this decomposition to hold in the previous interpretation, one has to discard mutual interaction (the surface term) between the two subsystems (at least in the thermodynamic limit), which is only possible in finite-dimensional (i.e. non-fully connected) models. In fully connected spin systems, the surface and volume terms are of the same order, thus the former cannot be neglected. In order to avoid this, the Guerra's generalized partition function \eqref{eq:9} works by interpolating between the original model and two {\it independent} systems with sizes $N_1$ and $N_2$, such that $N_1+N_2=N$. In this way, surface terms are not present.
	\end{Remark}
}
\begin{Remark}
{The global magnetization of the composite system is a convex linear combination of the order parameters of each component, {\it i.e.}}
\begin{equation}
m_N{(\bb {\sigma}) }=\rho m_1{(\bb {\sigma}) }+ (1-\rho)m_2 {(\bb {\sigma}) },
\end{equation}
where $\rho=N_1/N$. {Further, the interpolating free energy satisfies the boundary conditions}
\begin{eqnarray}
A_N (\beta ,t=1)&=& A_N(\beta),\\
A_N (\beta ,t=0)&=& \rho A_{N_1}(\beta)+(1-\rho) A_{N_2}(\beta),
\end{eqnarray}
where of course $A_N (\beta)$ is the same as \eqref{eq:5}.
\end{Remark}

\begin{Definition}
	Given a function {$F(\bb{\sigma})$} of the spins in the {system}, its expectation value {for the interpolating system} \eqref{eq:2.1.1} is defined as
	{\begin{equation}\label{eq:2.1.2}
	\omega_t[F(\bb {\sigma})]=\frac{\sum_{\bb{\sigma}}F(\bb {\sigma})B_N(\bb\sigma)}{Z_N(\beta,t)}.
	\end{equation}}
\end{Definition}

Having introduced the central quantities, we are now in position to state the following

\begin{Theorem}
The thermodynamic limit of the model \eqref{eq:2} for even $p$ exists and it is given by
\begin{equation}
A(\beta)=\lim_{N\to \infty}A_N (\beta)=\underset{N}{\text{inf}}\, A_N(\beta),
\end{equation}

where $\text{inf}$ stands for the infimum.
\end{Theorem}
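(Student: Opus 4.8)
The plan is to establish that the sequence $\log Z_N(\beta) = N A_N(\beta)$ is subadditive, i.e.\ $N A_N(\beta) \le N_1 A_{N_1}(\beta) + N_2 A_{N_2}(\beta)$ whenever $N = N_1 + N_2$, and then to invoke Fekete's subadditive lemma to conclude that the limit exists and coincides with $\inf_N A_N(\beta)$. The interpolating partition function \eqref{eq:2.1.1} is precisely the device that connects the two sides of this inequality, since it reduces to the full system at $t=1$ and to the two decoupled subsystems at $t=0$; the whole argument then reduces to controlling the sign of $\partial_t A_N(\beta,t)$ on $[0,1]$.

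Concretely, I would differentiate $A_N(\beta,t) = N^{-1}\log Z_N(\beta,t)$ with respect to $t$. As the only $t$-dependence sits in the exponent of the Boltzmann factor, the derivative of the logarithm produces the expectation $\omega_t$ of the $t$-derivative of the exponent,
\begin{equation}
\partial_t A_N(\beta,t) = \beta\, \omega_t\!\left[ m_N(\bb\sigma)^p - \rho\, m_1(\bb\sigma)^p - (1-\rho)\, m_2(\bb\sigma)^p \right], \qquad \rho = \frac{N_1}{N}.
\end{equation}
The crux is then to show that the bracketed quantity is pointwise non-positive. Since $m_N = \rho\, m_1 + (1-\rho)\, m_2$ is a convex combination of $m_1, m_2 \in [-1,1]$, and since for even $p$ the map $x \mapsto x^p$ is convex on all of $\mathbb{R}$, Jensen's inequality gives $m_N^p \le \rho\, m_1^p + (1-\rho)\, m_2^p$ for every configuration $\bb\sigma$. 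Because $\omega_t$ is a positive normalized expectation and $\beta > 0$, this forces $\partial_t A_N(\beta,t) \le 0$, so $A_N(\beta,t)$ is non-increasing on $[0,1]$. Evaluating at the endpoints and multiplying by $N$ yields exactly the claimed subadditivity.

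To close, I would check that $A_N(\beta)$ is uniformly bounded: from $0 \le m_N^p \le 1$ one reads off $\log 2 \le A_N(\beta) \le \log 2 + \beta$, so the infimum is finite, and Fekete's lemma then gives $\lim_N A_N(\beta) = \inf_N A_N(\beta)$. The one genuinely delicate point is the convexity step, and it is exactly here that the restriction to even $p$ is essential: for odd $p$ the map $x \mapsto x^p$ fails to be convex on $(-\infty,0)$, so the Jensen bound — and with it the sign of $\partial_t A_N$ — no longer holds. Extending the statement to odd $p$, or to general polynomial Hamiltonians as in \cite{contucci}, would require a separate argument not pursued here.
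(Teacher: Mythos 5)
Your proposal is correct and follows essentially the same route as the paper: the same Guerra interpolation between the full system and the two decoupled subsystems, the same convexity/Jensen argument for even $p$ giving $\partial_t A_N(\beta,t)\le 0$, hence subadditivity of $N A_N(\beta)$, and Fekete's lemma to conclude. The only cosmetic difference is in the boundedness step, where you exploit $m_N^p\ge 0$ for even $p$ to get the two-sided bound $\log 2\le A_N(\beta)\le \log 2+\beta$, while the paper uses the weaker (but $p$-independent) bound $m_N^p\ge -1$, giving $A_N(\beta)\ge \log 2-\beta$; both suffice to rule out $\inf_N A_N(\beta)=-\infty$.
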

\begin{proof}
	
{
First of all, we show that, for fixed $\beta\in \mathbb R^+$, the intensive pressure $A_N(\beta)$ is bounded from below for each $N\in \mathbb N$. To see this, we write down its explicit expression:
$$
A_N(\beta)=\frac1N\log \sum _{\bb \sigma} \exp (\beta N m_N (\bb\sigma)^p).
$$
Now, since $m_N (\bb \sigma)\in [-1,1]$ for each $\bb\sigma\in \Omega_N$, we have in particular $m_N(\bb\sigma)^p\ge -1$ for each $p\in \mathbb N$. Thus, we can bound the intensive pressure as
$$
A_N (\beta)\ge \frac1N \log \sum_{\bb\sigma}\exp (-\beta N)= \frac1N \log 2^N \exp(-\beta N)=\log 2-\beta.
$$
Thus, since the sequence $\{A_N(\beta)\}_{N\in \mathbb N}$ is bounded from below for each $\beta \in \mathbb R_+$, we have $\lim_{N\to\infty} A_N (\beta)$ is finite. Next, we consider the intensive pressure associated to the interpolating partition function \eqref{eq:2.1.1}. Its $t$-derivative is}
$$
\frac{\partial}{\partial t} A_N(\beta, t)=\beta \omega_t \left(m_N{(\bb {\sigma}) }^p-\rho m_1 {(\bb {\sigma}) }^p-(1-\rho)m_2 {(\bb {\sigma}) }^p\right).
$$
Provided that $p$ is even, the mapping $x \to x^p$ is convex, meaning that

$$
m_N{(\bb {\sigma}) }^p=(\rho m_1{(\bb {\sigma}) }+ (1-\rho)m_2 {(\bb {\sigma}) })^p\le \rho m_1 {(\bb {\sigma}) }^p+ (1-\rho)m_2 {(\bb {\sigma}) }^p,
$$
$\forall \rho \in [0,1]$ and for all values of $m_1{(\bb {\sigma}) }$ and $m_2 {(\bb {\sigma}) }$. This directly implies that
$$
\frac{\partial A_N(\beta,t)}{\partial t} \le 0,
$$
thus the interpolated intensive pressure is a decreasing function w.r.t. the interpolating parameter $t$. As a straightforward consequence, we have

$$
N A_N (\beta)\le N_1 A_{N_1}(\beta)+ N_2 A_{N_2}(\beta).
$$

In other words, the sequence $\{N A_N(\beta)\}_N$ is sub-additive, and by virtue of the Fekete's lemma, we easily get
$$
\lim _{N\to \infty} A_N(\beta)=\underset{N}{\text{inf}} \, A_N (\beta)\equiv A(\beta){ \ge \log 2-\beta.}
$$
{This proves our statement.}
\end{proof}

\subsection{Solution via Guerra's interpolating scheme}

The solution of the model can be equivalently carried out via the interpolating techniques \cite{} originally developed in the context of spin-glass as alternative to the replica trick route. The key idea of the method is to introduce a generalized partition function interpolating between the original model and an effective field scenario (i.e. this limit is a 1-body model). Then, we introduce

\begin{Definition}
Let $t\in [0,1]$ an interpolating parameter. Then, the Guerra's generalized partition function is defined as
\begin{equation}
\label{eq:9}
Z_N(\beta,t)=\sum_{\bb {\sigma}}\exp \Big(t \beta N m_N{(\bb {\sigma})}^p+N(1-t)\psi m_N{(\bb {\sigma})}\Big),
\end{equation}
where $\psi$ is a constant to be set {\it a posteriori}. 
{The Boltzmann factor associated to this partition function is
$$
B_N (\bb\sigma,t)=\exp \Big(t \beta N m_N{(\bb {\sigma})}^p+N(1-t)\psi m_N{(\bb {\sigma})}\Big).
$$
}
The associated generalized intensive statistical pressure is therefore
\begin{equation} 
\label{eq:10}
A_N(\beta,t)=\frac1N\log Z_N(\beta, t).
\end{equation}
\end{Definition}
The key idea of the interpolating program is resumed in the
\begin{Proposition}
The intensive free energy of the model in the thermodynamic limit is obtained by means of the sum rule
\begin{equation}
\label{eq:11}
A(\beta)\equiv A(\beta,t=1)= \lim_{N\to \infty}\Big(A_N (\beta,t=0)+\int_0 ^1 dt' \partial_{t'}A_N(\beta,t')\Big).
\end{equation}
\end{Proposition}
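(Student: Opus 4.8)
The plan is to read the identity as the fundamental theorem of calculus applied to $A_N(\beta,t)$ at fixed finite $N$, followed by passage to the thermodynamic limit. First I would check that $A_N(\beta,t)$ is a well-behaved function of $t$ on $[0,1]$: since $Z_N(\beta,t)$ is a finite sum (over the $2^N$ configurations $\bb\sigma\in\Omega_N$) of exponentials whose arguments $t\beta N m_N(\bb\sigma)^p+N(1-t)\psi\, m_N(\bb\sigma)$ are affine in $t$, the map $t\mapsto Z_N(\beta,t)$ is smooth and strictly positive, hence $A_N(\beta,t)=\frac1N\log Z_N(\beta,t)$ is $C^\infty$ on $[0,1]$. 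A direct differentiation gives the explicit form $\partial_t A_N(\beta,t)=\omega_t\big[\beta\, m_N(\bb\sigma)^p-\psi\, m_N(\bb\sigma)\big]$, which is continuous and therefore integrable on $[0,1]$.

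Next I would invoke the fundamental theorem of calculus at fixed $N$, which yields the exact finite-size identity
\begin{equation}
A_N(\beta,t=1)=A_N(\beta,t=0)+\int_0^1 dt'\,\partial_{t'}A_N(\beta,t').
\end{equation}
The crucial observation is then the boundary value at $t=1$: inserting $t=1$ into \eqref{eq:9} makes the effective-field term proportional to $(1-t)$ vanish, so that $Z_N(\beta,t=1)=\sum_{\bb\sigma}\exp(\beta N m_N(\bb\sigma)^p)=Z_N(\beta)$, and therefore $A_N(\beta,t=1)=A_N(\beta)$ coincides with the genuine intensive pressure \eqref{eq:5}.

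It then remains to take the limit $N\to\infty$ of both sides. On the left-hand side the limit exists and equals $A(\beta)$ by the thermodynamic-limit theorem proved above (for even $p$; the extension to odd $p$ follows from \cite{contucci}), so that $\lim_{N\to\infty}A_N(\beta,t=1)=A(\beta)\equiv A(\beta,t=1)$. Keeping the limit outside the bracket on the right-hand side then reproduces exactly the claimed sum rule \eqref{eq:11}.

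I expect the only genuinely delicate point to be conceptual rather than computational: the statement deliberately keeps $\lim_{N\to\infty}$ outside the integral, so no dominated-convergence or uniform-integrability argument is needed to exchange the limit with $\int_0^1 dt'$. The entire substance of the proposition is thus carried by the previously established existence of the thermodynamic limit. The hard part is postponed rather than encountered here: were one instead to push the limit inside the integral, as is ultimately required in order to evaluate $A(\beta)$ in closed form, one would have to control $\lim_{N\to\infty}\omega_{t}[\,\cdot\,]$ uniformly in $t$, which is where the substantive analytic work of the interpolation method actually resides.
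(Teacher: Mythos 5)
Your proposal is correct and follows essentially the same route as the paper, whose entire proof is the one-line remark that the identity is ``a straightforward application of the fundamental theorem of integral calculus''; you have simply made explicit the smoothness of $t\mapsto A_N(\beta,t)$ at fixed $N$, the boundary identification $A_N(\beta,t=1)=A_N(\beta)$, and the fact that the $N\to\infty$ limit stays outside the integral (so existence of the limit, established in the preceding theorem, is all that is needed). Your closing observation---that the genuinely delicate exchange of limit and integral is postponed to the later evaluation of $\lim_{N\to\infty}\partial_t A_N(\beta,t)$---is exactly where the paper too defers the analytic work.
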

\begin{proof}
The proof is a straightforward application of the fundamental theorem of integral calculus.
\end{proof}

\begin{Definition}
	Given a function {$F(\bb{\sigma})$} of the spins in the {system}, its expectation value w.r.t. to the partition function \eqref{eq:9} is defined as
{
\begin{equation}\label{eq:12}
\omega_t[F(\bb {\sigma})]=\frac{\sum_{\bb{\sigma}}F(\bb {\sigma})B_N (\bb\sigma,t)}{Z_N(\beta,t)}.
\end{equation}
}
\end{Definition}

Let us call again $\bar m$ the thermodynamic limit of the value of the global magnetization at the equilibrium, {\it i.e.}
\begin{equation}
\label{eq:2.21}
\bar m= \lim_{N\to \infty}\omega_t [m_N {(\bb {\sigma})}]. 
\end{equation}
{
\begin{Remark}
The equilibrium value $\bar m$ of the global magnetization in the thermodynamic limit can can depend on the interpolating parameter $t$. However, it is possible to show that in the thermodynamic limit, the free parameter $\psi$ can be conveniently chosen in order for $\bar m$ to be independent on $t$ almost everywhere (see next Proposition). To do this, we shall again assume the self-averaging property of the order parameter
\begin{equation}
\label{eq:14}
\lim_{N\to \infty} P(m_N(\bb {\sigma}))= \delta (m-\bar m).
\end{equation}
 $\beta$ almost everywhere and for all $t\in [0,1]$. This again implies that the variance of the global magnetization vanishes in thermodynamic limit, {\it i.e.} $\lim_{N\to \infty} (\omega_t [m_N^2]-\omega_t [m_N]^2)=0$. Further, we make the assumption that the variance of the order parameter scales as $N^{-1}$, {\it i.e.}
	\begin{equation}
		\label{eq:2.22}
	\lim_{N\to \infty}\big\vert N( \omega_t [m_N ^2]- \omega_t[m_N]^2)\big\vert <+\infty, \quad\text{almost everywhere,}
	\end{equation}
	or, in other words, fluctuations around the thermodynamic value of the order parameter scales as $1/\sqrt N$. In the physics jargon, this is equivalent to require that the magnetic susceptibility diverges only at the critical point, which is reasonable for ferromagnetic systems. The assumption \eqref{eq:2.22} will be very useful in Proposition \ref{prop:2}, since we can introduce a function $\Delta (\bb \sigma)$ accounting for fluctuations around the thermodynamic value of the global magnetization:
	\begin{equation}
	\label{eq:myexp}
	 \omega_t [(m_N- \omega_t[m_N])^2]=\mathcal O \left(\frac1N\right)\Leftrightarrow m_N(\bb\sigma )= \omega_ t [m_N (\bb \sigma)]+\frac{\Delta (\bb\sigma)}{\sqrt N},
	\end{equation}
	for sufficiently large $N$ and $\beta$ almost everywhere. The function $\Delta(\bb\sigma)$ has zero mean ($\omega_t[\Delta (\bb\sigma)]=0$, which trivially follows from taking the expectation value of both the terms in the expansion) and finite variance $\omega_t[\Delta (\bb\sigma)^2]$, even in the thermodynamic limit.
	In other words, by restricting ourselves to the pure state with positive magnetization (without loss of generality), we can express the global magnetization of a general configuration $\bb \sigma$ in terms of fluctuations (the second term on the right side in \eqref{eq:myexp}) around the expectation value $\omega_t [m_N]$. The two sides of \eqref{eq:myexp} are compatible, since with the expansion \eqref{eq:myexp} it is easy to show that
	$$
	\omega_t [(m_N- \omega_t[m_N])^2]=\frac{\omega_t [\Delta (\bb\sigma)^2]}{N},
	$$
	for sufficiently large $N$, thus satisfying the assumption \eqref{eq:2.22}.
\end{Remark}
Now, we have all of the ingredients needed to prove the following}

\begin{Proposition}\label{prop:2}
It is possible to suitably choose the parameter $\psi\in \mathbb R$ such that the expectation value in the thermodynamic limit of the global magnetization is independent on the interpolating parameter almost everywhere, {\it i.e.}
\begin{equation}
\label{eq:13}
\frac{d\bar m}{d t}=0 \quad {a.e.}
\end{equation}
\end{Proposition}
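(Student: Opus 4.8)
The plan is to compute $\frac{d}{dt}\omega_t[m_N]$ explicitly and to show that, at leading order in $1/N$, it is controlled by a single connected correlator whose prefactor can be annihilated by a suitable choice of $\psi$. First I would write the exponent in the Boltzmann factor \eqref{eq:9} as $\Phi_N(\bb\sigma,t)=t\beta N m_N^p+N(1-t)\psi m_N$, so that $\partial_t\Phi_N=N(\beta m_N^p-\psi m_N)$. Differentiating the expectation value \eqref{eq:12} and using the standard identity that the $t$-derivative of a Boltzmann average produces the associated covariance (connected correlation), I obtain
\begin{equation}
\frac{d}{dt}\omega_t[m_N]=N\beta\big(\omega_t[m_N^{p+1}]-\omega_t[m_N]\omega_t[m_N^p]\big)-N\psi\big(\omega_t[m_N^2]-\omega_t[m_N]^2\big).
\end{equation}

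Next I would insert the fluctuation expansion \eqref{eq:myexp}, namely $m_N=\mu+\Delta/\sqrt N$ with $\mu\equiv\omega_t[m_N]$, $\omega_t[\Delta]=0$ and $\omega_t[\Delta^2]$ finite in the limit. Expanding $m_N^p$ and $m_N^{p+1}$ in powers of $\Delta/\sqrt N$ and taking expectations, the order-$N^{-1/2}$ terms drop out by $\omega_t[\Delta]=0$, so that $\omega_t[m_N^2]-\mu^2=\omega_t[\Delta^2]/N+o(1/N)$ and $\omega_t[m_N^{p+1}]-\mu\,\omega_t[m_N^p]=p\,\mu^{p-1}\,\omega_t[\Delta^2]/N+o(1/N)$, where the combinatorial coefficient $p$ arises as $\binom{p+1}{2}-\binom{p}{2}$. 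Multiplying by the explicit factor $N$ exactly compensates the $1/N$ scaling, and passing to the thermodynamic limit I arrive at
\begin{equation}
\frac{d\bar m}{dt}=\big(\beta p\,\bar m^{p-1}-\psi\big)\,\lim_{N\to\infty}\omega_t[\Delta^2].
\end{equation}

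To conclude I would exploit the freedom in $\psi$: since the variance factor $\lim_N\omega_t[\Delta^2]$ is nonnegative and finite by assumption \eqref{eq:2.22}, the derivative vanishes precisely when $\psi=\beta p\,\bar m^{p-1}$. Reading this as the defining relation for the constant $\psi$, the identity $\frac{d\bar m}{dt}=0$ then holds as a self-consistent fixed point: a constant profile $\bar m(t)\equiv m^\star$ with $\beta p\,(m^\star)^{p-1}=\psi$ solves the resulting first-order ODE for $\bar m(t)$, and this is the branch singled out by matching the $t=0$ effective-field value $\bar m(0)=\tanh\psi$ to the genuine self-consistency equation \eqref{eq:8}. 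The qualifier \emph{almost everywhere} absorbs the measure-zero set of $t$ (the critical points) at which $\lim_N\omega_t[\Delta^2]$ may diverge, i.e.\ where the magnetic susceptibility blows up.

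The step I expect to be the main obstacle is the rigorous control of the $1/N$ expansion: one must verify that the subleading $o(1/N)$ remainders in both connected correlators genuinely vanish after multiplication by $N$, which requires the finiteness and $t$-uniformity of the rescaled fluctuation moments $\omega_t[\Delta^2]$ (and implicitly of $\omega_t[\Delta^3]$, to kill the $O(N^{-1/2})$ contributions) guaranteed by \eqref{eq:2.22}, together with the legitimacy of interchanging the $N\to\infty$ limit with the $t$-differentiation. A secondary delicacy is the self-referential character of the choice $\psi=\beta p\,\bar m^{p-1}$, since $\bar m$ itself depends on $\psi$; this I would handle via the fixed-point/uniqueness argument for the autonomous ODE above, relegating the exceptional critical points to the \emph{a.e.} clause.
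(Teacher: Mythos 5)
Your proposal is correct and follows essentially the same route as the paper: the covariance identity for $\frac{d}{dt}\omega_t[m_N]$, the fluctuation expansion $m_N=\omega_t[m_N]+\Delta/\sqrt N$ producing the prefactor $\beta p\,\bar m^{p-1}-\psi$ (your coefficient $\binom{p+1}{2}-\binom{p}{2}=p$ is exactly the paper's $\tfrac{p(p+1)}{2}-\tfrac{p(p-1)}{2}$), and the choice $\psi=\beta p\,\bar m^{p-1}$. The limit--derivative interchange you flag as the main obstacle is precisely the point the paper settles via Egorov's theorem (a.e.\ convergence of $\{\frac{d}{dt}\omega_t[m_N]\}_N$ gives almost uniform convergence), while your fixed-point discussion of the self-referential choice of $\psi$ is a sound elaboration of a step the paper leaves implicit.
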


\begin{proof}
{First of all, we assume the decomposition \eqref{eq:myexp}, which we report here for the sake of completeness:}
\begin{equation}
	\label{eq:2.25}
	m_N{(\bb\sigma)}= \omega_t [m_N{(\bb\sigma)}]+\frac{\Delta{(\bb\sigma)}}{\sqrt N}.
\end{equation}
The $t$-derivative of the expectation value of the global magnetization is clearly
\begin{equation}
\label{eq:2.26}
\begin{split}
\frac{d}{d t} \omega_t [m_N {(\bb\sigma)}]&=\beta N \left(\omega_t [m_N{(\bb\sigma)}^{p+1}]-\omega_t[m_N{(\bb\sigma)}]\omega_t [m_N{(\bb\sigma)}^{p}]\right)\\
& -\psi N \left(	\omega_t [m_N{(\bb\sigma)}^2]-	\omega_t [m_N{(\bb\sigma)}]^2	\right).
\end{split}
\end{equation}
Adopting the expression \eqref{eq:2.25}, we can now compute this quantity term by term at the non-trivial order in the $1/N$ expansion. In particular
$$
	\omega_t [m_N{(\bb\sigma)}^{p+1}]=\omega_t \left[\Big(\omega_t [m_N{(\bb\sigma)}]+\frac{\Delta{(\bb\sigma)}}{\sqrt N}\Big)^{p+1}\right]=\sum_{k=0}^{p+1}\binom{p+1}{k}\omega_t [m_N{(\bb\sigma)}]^{p+1-k}\omega_t \Big[\Big(\frac{\Delta{(\bb\sigma)}}{\sqrt N}\Big)^k\Big].
$$
We are interested in considering only the contributions up to the $1/N$ order (the subleading terms will vanish in the thermodynamic limit). Then
\begin{equation}
\begin{split}
\label{eq:intermediate}
\omega_t [m_N{(\bb\sigma)}^{p+1}]&=\omega_t [m_N{(\bb\sigma)}]^{p+1}+\frac{(p+1)}{\sqrt N}\omega_t [m_N{(\bb\sigma)}]^p \omega_t [{\Delta {(\bb\sigma)}}]\\&+
\frac{p(p+1)}{2N}\omega_t [m_N{(\bb\sigma)}]^{p-1} \omega_t \Big[({\Delta {(\bb\sigma)}})^2\Big]+\mathcal R _1(m_N{(\bb\sigma)}),
\end{split}
\end{equation}
where $\mathcal R_1(m_N{(\bb\sigma)})$ accounts for the rest of the expansion, {\it i.e.}
\begin{equation}
\mathcal R_1 (m_N{(\bb\sigma)})=\sum_{k=3}^{p+1}\binom{p+1}{k}\omega_t [m_N{(\bb\sigma)}]^{p+1-k}\omega_t \Big[\Big(\frac{\Delta{(\bb\sigma)}}{\sqrt N}\Big)^k\Big].
\end{equation}
{The leading contribution in $\mathcal R_1$ is of order $N^{-3/2}$, thus the whole quantity will not contribute to \eqref{eq:2.26} in the limit $N\to\infty$}. Since the fluctuations have zero mean, the second term in \eqref{eq:intermediate} identically vanishes, leaving us only with
\begin{equation}
\label{eq:2.28}
\omega_t [m_N{(\bb\sigma)}^{p+1}]=\omega_t [m_N{(\bb\sigma)}]^{p+1}+
\frac{p(p+1)}{2N}\omega_t [m_N{(\bb\sigma)}]^{p-1} \omega_t [{\Delta {(\bb\sigma)}}^2]+\mathcal R_1 (m_N{(\bb\sigma)}).
\end{equation}
In a similar fashion, it is easy to prove that
\begin{equation}
\label{eq:2.29}
\omega_t [m_N{(\bb\sigma)}]\omega_t [m_N{(\bb\sigma)}^{p}]=\omega_t [m_N{(\bb\sigma)}]^{p+1}+
\frac{p(p-1)}{2N}\omega_t [m_N{(\bb\sigma)}]^{p-1} \omega_t [{\Delta {(\bb\sigma)}}^2]+\mathcal R_2 (m_N{(\bb\sigma)}),
\end{equation}
where also in this case $\mathcal R_2 (m_N{(\bb\sigma)})$ accounts for the subleading corrections scaling {at least as} $N^{-3/2}$ for large $N$. Finally, it is clear that
\begin{equation}
\label{eq:2.30}
\omega_t [m_N{(\bb\sigma)}^2]-	\omega_t [m_N{(\bb\sigma)}]^2= \frac1N \omega_t[\Delta{(\bb\sigma)}^2].
\end{equation}
Putting \eqref{eq:2.28}, \eqref{eq:2.29} and \eqref{eq:2.30} in \eqref{eq:2.26}, we easily get
\begin{equation}
\label{eq:2.32}
\begin{split}
\frac{d}{d t} \omega_t [m_N {(\bb\sigma)}]=\beta N &\Big(
\omega_t [m_N{(\bb\sigma)}]^{p+1}+
\frac{p(p+1)}{2N}\omega_t [m_N{(\bb\sigma)}]^{p-1} \omega_t [{\Delta {(\bb\sigma)}}^2]+\mathcal R_1 (m_N{(\bb\sigma)})
\\&-\omega_t [m_N{(\bb\sigma)}]^{p+1}-
\frac{p(p-1)}{2N}\omega_t [m_N{(\bb\sigma)}]^{p-1} \omega_t [{\Delta {(\bb\sigma)}}^2]-\mathcal R_2 (m_N{(\bb\sigma)})\Big)\\&
-\psi \omega_t [{\Delta {(\bb\sigma)}}^2]=
(\beta p \omega_t [m_N{(\bb\sigma)}]^{p-1}-\psi)\, \omega_t [\Delta{(\bb\sigma)}^2]+N \mathcal Q(m_N{(\bb\sigma)}),
\end{split}
\end{equation}
where {we defined} $\mathcal Q (m_N{(\bb\sigma)})=\mathcal R_1 (m_N{(\bb\sigma)})-\mathcal R_2 (m_N{(\bb\sigma)})$ {whose leading contribution} scales itself as $N^{-3/2}$ in the large $N$ limit (thus, $N \mathcal Q (m_N{(\bb\sigma)}$ scales as $N^{-1/2}$){, and therefore it is negligible in the $N\to\infty$ limit}). The r.h.s. of the last line in \eqref{eq:2.32} is well-defined in the thermodynamic limit, thus - calling $\bar m = \lim_{N\to \infty} \omega_t[m_N [\bb \sigma]]$, we have (recall that the variable $\Delta$ has finite variance in the $N\to\infty$ limit)
\begin{equation}
\lim_{N\to\infty}\frac{d}{d t} \omega_t [m_N {(\bb\sigma)}]=(\beta p \bar m^{p-1}-\psi)\, \lim_{N\to\infty}\omega_t [\Delta{(\bb\sigma)}^2].
\end{equation}
{Here, we used the fact that $N \mathcal Q\to 0$ in the limit $N\to\infty$, since the leading contribution of the quantity $\mathcal Q$ is of order $N^{-3/2}$.}
This means that the sequence $\{\frac{d\omega_ t[m_N{(\bb\sigma)}]}{dt}\}_{N}$ converges almost everywhere to the r.h.s. of the previous equation, so by virtue of Egorov's theorem \cite{zbMATH02629860}, it is almost uniformly convergent. As a consequence, the relation
$$
\lim_{N\to\infty}\frac{d}{d t} \omega_t [m_N{(\bb\sigma)}]=\frac{d}{d t}\lim_{N\to\infty} \omega_t [m_N {(\bb\sigma)}]
$$
holds almost everywhere, which means
\begin{equation}
\frac{d\bar m}{d t}=(\beta p \bar m^{p-1}-\psi)\, \lim_{N\to\infty}\omega_t [\Delta{(\bb\sigma)}^2]
\end{equation}
Then, it is {simple} to note that the requirement $\frac{d\bar m}{dt}=0$ can be consistently fulfilled almost everywhere by choosing
$$
\psi=\beta p \bar m^{p-1},
$$
which proves our assertion.
\end{proof}

\begin{Proposition}
The expectation value of the $p$-th power of the global magnetization can be written in terms of the centered momenta with degree lower or equal to $p$.
\end{Proposition}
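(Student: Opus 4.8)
The plan is to recognize this as an exact finite-$N$ combinatorial identity, namely the standard relation expressing raw moments through central (``centered'') ones, specialized to the random variable $m_N(\bb\sigma)$ under the state $\omega_t$. First I would introduce the centered fluctuation $\delta(\bb\sigma)=m_N(\bb\sigma)-\omega_t[m_N(\bb\sigma)]$, which by construction has vanishing mean, $\omega_t[\delta(\bb\sigma)]=0$, and denote by $\mu_k\equiv\omega_t[\delta(\bb\sigma)^k]$ the centered momentum of degree $k$. This is precisely the object already appearing, up to the $N^{-k/2}$ rescaling implicit in the decomposition $m_N=\omega_t[m_N]+\Delta/\sqrt N$ of \eqref{eq:myexp}, in the proof of Proposition \ref{prop:2}; the present statement simply isolates its combinatorial backbone.

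The central step is the binomial expansion $m_N^p=(\omega_t[m_N]+\delta)^p=\sum_{k=0}^p\binom{p}{k}\omega_t[m_N]^{p-k}\delta^k$, followed by taking the expectation value and using the linearity of $\omega_t$:
\[
\omega_t[m_N(\bb\sigma)^p]=\sum_{k=0}^p\binom{p}{k}\,\omega_t[m_N(\bb\sigma)]^{p-k}\,\mu_k.
\]
Since $\mu_0=1$ and $\mu_1=0$, the $k=0$ term reproduces $\omega_t[m_N]^p$ while the $k=1$ term drops out, so the effective sum runs over $2\le k\le p$. This exhibits $\omega_t[m_N^p]$ as a polynomial in the mean $\omega_t[m_N]$ and in the centered momenta $\mu_2,\dots,\mu_p$, each of degree at most $p$, which is exactly the assertion.

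There is no genuine obstacle here: the result is an algebraic identity holding at every finite $N$ and for every $t\in[0,1]$, requiring neither the self-averaging hypothesis \eqref{eq:14} nor the thermodynamic limit. The only points deserving care are bookkeeping ones --- confirming that the centering value $\omega_t[m_N]$ is admitted alongside the genuinely centered momenta $\mu_k$, and checking that every surviving term carries a factor $\mu_k$ with $k\le p$, so that no momentum of degree exceeding $p$ ever enters. I would close by remarking that this lemma is precisely what organizes the $1/N$ expansions used earlier: inserting the scaling $\mu_k=N^{-k/2}\omega_t[\Delta(\bb\sigma)^k]$ shows that only the $k=2$ contribution survives at order $1/N$, thereby recovering the structure exploited in the proof of Proposition \ref{prop:2}.
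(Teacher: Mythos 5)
Your identity is correct and rests on the same one-line device as the paper's own proof---a binomial expansion followed by linearity of $\omega_t$---but you center at a different point, and the difference is worth flagging. The paper expands around the thermodynamic value $\bar m=\lim_{N\to\infty}\omega_t[m_N(\bb\sigma)]$,
\begin{equation*}
\omega_t [m_N (\bb\sigma)^p]=\omega_t [(\bar m+ m_N(\bb\sigma)-\bar m)^p]=\sum_{k=0}^p \binom{p}{k}\, \omega_t [( m_N(\bb\sigma)-\bar m)^k]\, \bar m^{p-k},
\end{equation*}
so its ``centered momenta'' are centered at $\bar m$ rather than at the finite-$N$ mean; consequently the $k=1$ term $p\,\bar m^{p-1}\,\omega_t[m_N(\bb\sigma)-\bar m]$ does \emph{not} vanish at finite $N$, and this is deliberate: in Remark \ref{rem:6} that very term is moved to the left-hand side to produce the combination $\omega_t[m_N(\bb\sigma)^p]-p\bar m^{p-1}\omega_t[m_N(\bb\sigma)]$ of Eq.~\eqref{eq:15}, which is precisely the form $\partial_t A_N$ takes after the choice $\psi=\beta p\bar m^{p-1}$. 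Your version, centered at $\omega_t[m_N(\bb\sigma)]$ so that $\mu_1=0$ exactly, is the cleaner probabilistic identity and proves the Proposition just as well (the statement does not fix the centering point); it also feeds correctly into the subsequent Theorem, since $\omega_t[m_N(\bb\sigma)]\to\bar m$ by definition and the central moments $\mu_2,\dots,\mu_p$ vanish under self-averaging, giving the same limit $\beta(1-p)\bar m^p$. What you lose is only bookkeeping: to recover Remark \ref{rem:6} and Eq.~\eqref{eq:15} verbatim you would have to reinstate the $k=1$ term by re-expanding around $\bar m$, whereas the paper's choice delivers that combination at no cost. Your closing observation that $\mu_k=N^{-k/2}\omega_t[\Delta(\bb\sigma)^k]$ under the decomposition \eqref{eq:myexp}, so that only $k=2$ survives at order $1/N$, is also correct and consistent with the computation in the proof of Proposition \ref{prop:2}.
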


\begin{proof}
The proof is a straightforward application of binomial theorem. Indeed
$$
\omega_t [m_N {(\bb\sigma)}^p]=\omega_t [(\bar m+ m_N {(\bb\sigma)}-\bar m)^p]=\sum_{k=0}^p \binom{p}{k} \omega_t [( m_N{(\bb\sigma)}-\bar m)^k] \bar m^{p-k}.
$$
\end{proof}

\begin{Remark}\label{rem:6}
We stress that we can also extract the lower two terms from the sum, in order to get
$$
\omega_t [m_N {(\bb\sigma)}^p]=\bar m^p+p  \, \bar m^{p-1} \, \omega_t [( m_N {(\bb\sigma)}-\bar m)]+\sum_{k=2}^p \binom{p}{k} \omega_t [( m_N {(\bb\sigma)}-\bar m)^k] \bar m^{p-k},
$$
or in other words
\begin{equation}
\label{eq:15}
\omega_t [m_N{(\bb\sigma)}^p]-p \bar m^{p-1}\omega_t [ m_N {(\bb\sigma)}]=(1-p)\bar m^p+\sum_{k=2}^p \binom{p}{k} \omega_t [( m_N {(\bb\sigma)}-\bar m)^k] \bar m^{p-k}.
\end{equation}
\end{Remark}
With all of these ingredients in our hand, we can prove the following
\begin{Theorem}
The thermodynamic limit of the intensive statistical pressure for the model \eqref{eq:2} is given by
\begin{equation}
\label{eq:16}
A(\beta)=\beta(1-p)\bar m^p +\log 2+\log \cosh (\beta p \bar m^{p-1}).
\end{equation}
\end{Theorem}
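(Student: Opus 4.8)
The plan is to apply the sum rule \eqref{eq:11}, which expresses $A(\beta)$ as the boundary term $\lim_{N\to\infty}A_N(\beta,t=0)$ plus the integral $\lim_{N\to\infty}\int_0^1 dt'\,\partial_{t'}A_N(\beta,t')$, and to evaluate each piece separately using the machinery assembled above. First I would compute the boundary term at $t=0$. Setting $t=0$ in the Guerra partition function \eqref{eq:9} removes the interacting contribution, leaving $Z_N(\beta,0)=\sum_{\bb\sigma}\exp\big(N\psi\,m_N(\bb\sigma)\big)=\sum_{\bb\sigma}\exp\big(\psi\sum_i\sigma_i\big)$, which factorizes over the single spins into $(2\cosh\psi)^N$. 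Hence $A_N(\beta,0)=\log 2+\log\cosh\psi$ exactly (indeed independently of $N$), and inserting the value $\psi=\beta p\,\bar m^{p-1}$ fixed in Proposition \ref{prop:2} reproduces precisely the $\log 2+\log\cosh(\beta p\,\bar m^{p-1})$ contribution to \eqref{eq:16}.

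Next I would handle the integral. Differentiating \eqref{eq:10} gives $\partial_t A_N(\beta,t)=\beta\,\omega_t[m_N(\bb\sigma)^p]-\psi\,\omega_t[m_N(\bb\sigma)]$; substituting once more $\psi=\beta p\,\bar m^{p-1}$ and invoking the centered-moment identity \eqref{eq:15} from Remark \ref{rem:6} recasts this as
\[
\partial_t A_N(\beta,t)=\beta(1-p)\bar m^p+\beta\sum_{k=2}^p\binom{p}{k}\,\omega_t\big[(m_N(\bb\sigma)-\bar m)^k\big]\,\bar m^{p-k}.
\]
In the thermodynamic limit the self-averaging assumption \eqref{eq:14}, reinforced by the variance scaling \eqref{eq:2.22}, forces each centered moment of order $\ge 1$ to vanish, so every term of the finite sum $k=2,\dots,p$ goes to zero and $\lim_{N\to\infty}\partial_t A_N(\beta,t)=\beta(1-p)\bar m^p$, which is independent of $t$. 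Integrating this constant over $t\in[0,1]$ contributes $\beta(1-p)\bar m^p$, and adding the $t=0$ boundary term produces exactly \eqref{eq:16}.

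The main obstacle is justifying the exchange of $\lim_{N\to\infty}$ with $\int_0^1 dt$, that is, establishing $\lim_N\int_0^1\partial_t A_N\,dt=\int_0^1\lim_N\partial_t A_N\,dt$. Since $m_N(\bb\sigma)\in[-1,1]$, the integrand is uniformly bounded by $\beta+|\psi|$, so the bounded convergence theorem (or, equivalently, the almost-uniform convergence supplied by Egorov's theorem already exploited in Proposition \ref{prop:2}) secures the interchange. A secondary point deserving care is that the argument requires self-averaging to suppress all centered moments up to order $p$, not merely the variance appearing in \eqref{eq:2.22}; because the relevant sum is finite (truncated at $k=p$), this is only a mild strengthening of the stated hypotheses and, crucially, involves no interchange of the limit with an infinite series.
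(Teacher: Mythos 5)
Your proposal is correct and follows essentially the same route as the paper's own proof: evaluate the one-body $t=0$ boundary term as $\log 2+\log\cosh(\beta p\,\bar m^{p-1})$, rewrite $\partial_t A_N$ via the choice $\psi=\beta p\,\bar m^{p-1}$ and the centered-moment identity \eqref{eq:15}, kill the $k\ge 2$ terms by self-averaging, and assemble via the sum rule \eqref{eq:11}. Your explicit justification of the limit--integral interchange by bounded convergence, and your observation that self-averaging must suppress all centered moments up to order $p$ (not just the variance), are points the paper leaves implicit, and they strengthen rather than alter the argument.
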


\begin{proof}
First of all, we compute the derivative of the statistical pressure. In this case, we have
$$
\frac{\partial A_N (\beta ,t)}{\partial t}= \beta \omega_t [m_N {(\bb\sigma)}^p]-\psi \omega_t [m_N {(\bb\sigma)}].
$$
We notice that, recalling our choice $\psi= \beta p \bar m^{p-1}$, we have
$$
\frac{\partial A_N (\beta ,t)}{\partial t}=\beta\left( \omega_t [m_N {(\bb\sigma)}^p]-p \bar m ^{p-1} \omega_t [m_N {(\bb\sigma)}]\right),
$$
so that we can apply the Rem. \ref{rem:6}. Indeed, using the relation \eqref{eq:15}, we easily get
$$
\frac{\partial A_N (\beta ,t)}{\partial t}=\beta (1-p)\bar m^p+\beta \sum_{k=2}^p \binom{p}{k} \omega_t [( m_N {(\bb\sigma)}-\bar m)^k] \bar m^{p-k}.
$$
We stress that, by assuming the self-averaging property \eqref{eq:14} of the order parameter, the sum in r.h.s. clearly vanishes in the thermodynamic limit (since centered momenta will disappear in the $N\to\infty$ limit), therefore leaving us only with
\begin{equation}
\label{eq:17}
\lim_{N\to \infty}\frac{\partial A_N (\beta ,t)}{\partial t}= \beta (1-p )\bar m^p,
\end{equation}
which is independent on $t$ (due to Prop. \ref{prop:2}). Thus, its $t$-integration is trivial. On the other side, the $t=0$ is easy to handle with, since it is a 1-body computation. Indeed
$$
A_N(\beta ,t=0)=\frac1N\log \sum_{\bb {\sigma}}\exp\big(N \psi m_N {(\bb\sigma)} \big)=\frac1N\log \sum_{\sigma_1=\pm1}\dots \sum_{\sigma _N=\pm1}\exp\big(\psi \sum_i \sigma_i\big),
$$
which leads to
\begin{equation}
\label{eq:18}
A_N(\beta ,t=0)=\frac1N \log 2^N \cosh \psi= \log 2+\log \cosh (\beta p \bar m^{p-1}),
\end{equation}
where we recalled our choice $\psi=\beta p \bar m^{p-1}$. Now, using the sum rule \eqref{eq:11}, we obtain the thermodynamic limit of the intensive pressure
\begin{equation}
\label{eq:19}
A(\beta)=\beta(1-p)\bar m^p +\log 2+\log \cosh (\beta p \bar m^{p-1}),
\end{equation}
as claimed.
\end{proof}

\begin{Corollary}
The self-consistency equation for the global magnetization is
\begin{equation}
\label{eq:20}
\bar m = \tanh (\beta p \bar m^{p-1}).
\end{equation}
\end{Corollary}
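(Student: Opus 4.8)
The plan is to read off the self-consistency equation directly from Proposition~\ref{prop:2} rather than from the explicit form \eqref{eq:16} of the pressure. The key observation is that Proposition~\ref{prop:2} fixes $\psi=\beta p\bar m^{p-1}$ precisely so that $\bar m=\lim_{N\to\infty}\omega_t[m_N(\bb\sigma)]$ is constant in $t$ almost everywhere. Since the limiting magnetization takes the same value at every $t\in[0,1]$, I would evaluate it at the endpoint $t=0$, where the interpolating measure degenerates to a factorized one-body effective-field model and the expectation becomes elementary.

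Concretely, at $t=0$ the Boltzmann factor associated with \eqref{eq:9} reduces to
\begin{equation*}
B_N(\bb\sigma,0)=\exp\Big(N\psi m_N(\bb\sigma)\Big)=\exp\Big(\psi\sum_{i=1}^N\sigma_i\Big)=\prod_{i=1}^N e^{\psi\sigma_i},
\end{equation*}
so the measure factorizes over sites. A one-line computation then gives $\omega_0[\sigma_i]=\tanh\psi$ for every $i$, hence $\omega_0[m_N(\bb\sigma)]=\tanh\psi$ for all $N$. Passing to the limit (which is trivial, the quantity being $N$-independent) and inserting the value $\psi=\beta p\bar m^{p-1}$ supplied by Proposition~\ref{prop:2} yields $\bar m=\tanh(\beta p\bar m^{p-1})$, which is \eqref{eq:20}.

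As an independent cross-check -- and to connect with the purely thermodynamic derivation \eqref{eq:8} -- I would also verify that extremizing the pressure \eqref{eq:16} with respect to $\bar m$ reproduces the same condition. Differentiating \eqref{eq:16} gives $\partial_{\bar m}A=\beta p(p-1)\bar m^{p-2}\big(\tanh(\beta p\bar m^{p-1})-\bar m\big)$, so that $\partial_{\bar m}A=0$ forces either the trivial root $\bar m=0$ or the self-consistency equation \eqref{eq:20}, in agreement with the first route.

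The main obstacle -- more a conceptual than a computational one -- is justifying the extremization step in this second route: in \eqref{eq:16} the parameter $\bar m$ appears both explicitly and implicitly through $\psi=\beta p\bar m^{p-1}$, so treating $\partial_{\bar m}A=0$ as a genuine variational principle requires care. This is exactly why I would regard the $t=0$ evaluation as the primary argument: it invokes only the already-established $t$-independence of Proposition~\ref{prop:2} together with a factorized one-body average, and so delivers \eqref{eq:20} with no further assumptions.
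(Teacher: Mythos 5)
Your proposal is correct, but your primary argument reaches \eqref{eq:20} by a genuinely different route than the paper. The paper's proof is precisely the one you relegate to a cross-check: it takes the closed-form pressure \eqref{eq:16} and imposes the extremality condition $\partial_{\bar m}A(\beta)=0$, in parallel with the heuristic derivation \eqref{eq:8}; your explicit computation $\partial_{\bar m}A=\beta p(p-1)\bar m^{p-2}\big(\tanh(\beta p\bar m^{p-1})-\bar m\big)$ is just the paper's one-line argument carried out (and note the ``extra'' root $\bar m=0$ is not really a separate case, since it satisfies \eqref{eq:20} as well). Your primary argument instead never invokes a variational principle: it combines the $t$-independence of $\bar m$ from Proposition \ref{prop:2} with the factorization of the $t=0$ measure of \eqref{eq:9}, so that $\bar m=\tanh\psi$ becomes the fixed-point condition $\bar m=\tanh(\beta p\bar m^{p-1})$ once the choice $\psi=\beta p\bar m^{p-1}$ is substituted. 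This buys exactly what you claim: it sidesteps the conceptual awkwardness of extremizing \eqref{eq:16} in a variable that also enters through $\psi$, and it is in fact the statistical-mechanics counterpart of the paper's later PDE derivation of \eqref{eq:30}, where the same equation arises by transporting the initial profile $u_0(x)=\tanh x$ along characteristics of the inviscid equation. The only technical point worth flagging in your route is the endpoint evaluation: Proposition \ref{prop:2} gives $d\bar m/dt=0$ only almost everywhere, so to set $t=0$ you should add that $\bar m(t)$ is continuous (indeed absolutely continuous) in $t$, whence the a.e.\ vanishing derivative implies genuine constancy on $[0,1]$; this is at the same level of rigor as the paper itself, which freely uses the $t=0$ one-body computation \eqref{eq:18} with $\psi=\beta p\bar m^{p-1}$ inside its sum rule.
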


\begin{proof}
The derivation of the self-consistency equation immediately follows from the extremality condition for the statistical pressure $\partial _{\bar m}A(\beta)=0$.
\end{proof}

This is in agreement with the results coming from purely statistical mechanics arguments \eqref{eq:8}.

\section{Few words on Burgers hierarchy}\label{sec:3}

The Burgers equation \cite{burgers1,BURGERS1948171,whitham} is one of the most studied nonlinear evolutive equations, and it can be written in the form
\begin{equation}
\label{eq:3.1}
u_t+ 2u u_x +\alpha u_{xx}=0,
\end{equation}
where $\alpha $ is the {\it viscosity} parameter {and $(\dots)_t\equiv \partial _t (\dots)$, $(\dots)_x\equiv \partial _x (\dots)$, $(\dots)_{xx}\equiv \partial^2 _x (\dots)$ and so on}. This equation is known to emerge as a $1+1$-dimensional reduction of Navier-Stokes equations for an incompressible fluid in absence of pressure gradient \cite{whitham,WAZWAZ2}. The most important peculiarity of this equation is that it can be linearized in the heat equation via Cole-Hopf transform \cite{cole,hopf}. Furthermore, it is one of the simplest models describing the development and propagation of shock waves. A related and well-studied equation is the Sharma-Tasso-Olver (STO) equation \cite{olver,Sharma,tasso1976cole}, which can be written as
\begin{equation}
\label{eq:3.2}
u_t + 3u^2 u_x +3 \alpha u_x ^2 +3\alpha u u_{xx}+\alpha^2 u_{xxx}=0.
\end{equation} 
This equation is known to be integrable (in particular, it presents infinitely many symmetries, a bi-Hamiltonian structure, solitary wave solution and an infinite number of conservation laws \cite{tasso1976cole,Sharma,olver,WAZWAZ20071205}).
\par
These two equations are the lowest elements of the so-called Burgers hierarchy, which can be presented in the form
\begin{equation}
\label{eq:3.3}
	\frac{\partial u(t,x)}{\partial t}+\frac{\partial}{\partial x}\left( \alpha\frac{\partial}{\partial x}+u(t,x)\right)^{n}u(t,x)=0,\quad n=1,2,\dots.
\end{equation}
The next two higher equations in the hierarchy are respectively
\begin{eqnarray}
u_t +4 u^3 u_x +12 \alpha u u_x ^2+6\alpha u^2 u_{xx}+10 \alpha^2 u_x u_{xx}+4\alpha^2 u u_{xxx}+\alpha^3 u_{xxxx}&=&0,\\
u_t+5u^4 u_x+30\alpha u^2 u_x^2 +15 \alpha^2 u_x^3+10\alpha u^3 u_{xx}+50\alpha^2 u u_x u_{xx}+10 \alpha^3 u_{xx}^2\notag \\+10 \alpha^2 u^2 u_{xxx}+15 \alpha^3 u_x u_{xxx}+5\alpha^3 u u_{xxxx}+\alpha^4 u_{xxxxx}&=&0.
\end{eqnarray}
It is clear that the complexity of the elements in the Burgers hierarchy dramatically increases with the index $n$. However, all of these equations share the same property of Burgers equation, see also \cite{olver}.

\begin{Remark}
Notice that the Burgers hierarchy is also commonly written in the form
\begin{equation}
	\frac{\partial u(t,x)}{\partial t}+\delta\frac{\partial}{\partial x}\left( \frac{\partial}{\partial x}+u(t,x)\right)^{n}u(t,x)=0.
\end{equation}
This is related to the form given in \eqref{eq:3.3} is given by applying on the former the transformation $x\to \delta x$ with $\delta =\frac1\alpha$.
\end{Remark}

\begin{Lemma}\label{lem:1}
The following identity holds:
\begin{equation}
\left(\frac{\partial}{\partial x}+\frac{\Psi_x}{\Psi}\right)^n \frac{\Psi_x}{\Psi}=\frac{\Psi_{n+1,x}}{\Psi},
\end{equation}
where $\Psi_{n+1,x}=\partial_x ^{n+1}\Psi$.
\end{Lemma}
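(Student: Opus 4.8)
The plan is to recognize the first-order operator $D := \partial_x + \Psi_x/\Psi$ as ordinary differentiation conjugated by $\Psi$, which collapses the whole computation. Writing $M_\Psi$ for the operator of multiplication by $\Psi$ (and $M_{1/\Psi}$ for multiplication by $1/\Psi$, legitimate since $\Psi$ is tacitly assumed nonvanishing), the first and essentially only substantive step is to establish the intertwining relation $D = M_{1/\Psi}\,\partial_x\,M_\Psi$ as an identity of operators. This follows from a one-line application of the product rule: for any smooth $g$ one has $M_{1/\Psi}\,\partial_x(M_\Psi g) = \Psi^{-1}(\Psi_x g + \Psi g_x) = (\Psi_x/\Psi)g + g_x = Dg$.

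Next I would iterate this relation. Reading the composition right-to-left — these are noncommuting differential operators, so ordering matters — the inner factors $M_\Psi$ and $M_{1/\Psi}$ of adjacent copies cancel, giving $D^n = M_{1/\Psi}\,\partial_x^{\,n}\,M_\Psi$. To make the telescoping rigorous rather than heuristic I would run a one-line induction: $D^{k+1} = D\,D^k = (M_{1/\Psi}\partial_x M_\Psi)(M_{1/\Psi}\partial_x^{\,k}M_\Psi) = M_{1/\Psi}\,\partial_x^{\,k+1}M_\Psi$, since the middle product $M_\Psi M_{1/\Psi}$ is the identity.

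Finally I would evaluate both sides on the particular argument $\Psi_x/\Psi$. Because $M_\Psi(\Psi_x/\Psi) = \Psi_x$, this yields $D^n(\Psi_x/\Psi) = \Psi^{-1}\partial_x^{\,n}\Psi_x = \Psi^{-1}\partial_x^{\,n+1}\Psi = \Psi_{n+1,x}/\Psi$, which is exactly the claimed identity.

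I do not expect a genuine obstacle here: the entire content is the observation that $D$ is the image of $\partial_x$ under similarity by $\Psi$, i.e. the operatorial shadow of the Cole--Hopf transform that linearizes the hierarchy \eqref{eq:3.3}. The only points deserving care are bookkeeping ones — respecting operator ordering in the iteration, justifying the cancellation through the induction above rather than as a purely formal manipulation, and retaining the standing assumption $\Psi\neq 0$ so that $M_{1/\Psi}$ and the ratio $\Psi_x/\Psi$ are well defined.
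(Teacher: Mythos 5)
Your proof is correct, and it is organized genuinely differently from the paper's. The paper proceeds by direct induction on $n$ applied to the specific function $\Psi_x/\Psi$: the base case is the explicit computation $\left(\partial_x+\frac{\Psi_x}{\Psi}\right)\frac{\Psi_x}{\Psi}=\frac{\Psi_{xx}}{\Psi}$, and the inductive step applies the operator once more to $\Psi_{n+1,x}/\Psi$, with the two cross terms $\mp\Psi_x\Psi_{n+1,x}/\Psi^2$ cancelling at each stage. You instead establish the strictly stronger operator identity $D^n=M_{1/\Psi}\,\partial_x^{\,n}\,M_\Psi$, valid on arbitrary smooth arguments, and then specialize to $\Psi_x/\Psi$ via $M_\Psi(\Psi_x/\Psi)=\Psi_x$. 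The computational kernel is identical in both routes --- your base relation $D=M_{1/\Psi}\,\partial_x\,M_\Psi$ is the paper's per-step product-rule cancellation in disguise --- but your framing isolates the structural reason the lemma holds: $D$ is $\partial_x$ conjugated by multiplication by $\Psi$, which is precisely the operatorial content of the Cole--Hopf transform. What your route buys is conceptual transparency and reusability (the conjugation identity essentially re-derives the paper's subsequent linearization theorem for the whole hierarchy in one stroke, rather than only feeding into it as a lemma); what the paper's route buys is economy of machinery, a bare two-line induction with no operator formalism. Your insistence on $\Psi\neq0$ so that $M_{1/\Psi}$ is well defined matches the assumption the paper makes explicitly in the theorem that follows, so nothing is missing on that front either.
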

\begin{proof}
The proof of this Lemma works in the same way of Lemma 1 in \cite{KUDRYASHOV20091293}. However, in order to make this Section self-contained, we reported here for the sake of completeness. The proof works by induction, so let us prove it for $n=1$ first:

\begin{equation}
\left(\frac{\partial}{\partial x}+\frac{\Psi_x}{\Psi}\right)\frac{\Psi_x}{\Psi}=\frac{\Psi_{xx}}{\Psi}-\frac{\Psi_x^2}{\Psi^2}+\frac{\Psi_x^2}{\Psi^2}=\frac{\Psi_{xx}}{\Psi}.
\end{equation}
Assuming now that the identity holds for $n$, we will prove it for $n+1$. Indeed:
\begin{equation}
\left(\frac{\partial}{\partial x}+\frac{\Psi_x}{\Psi}\right)^{n+1} \frac{\Psi_x}{\Psi}=\left(\frac{\partial}{\partial x}+\frac{\Psi_x}{\Psi}\right)\left(\frac{\partial}{\partial x}+\frac{\Psi_x}{\Psi}\right)^n \frac{\Psi_x}{\Psi}.
\end{equation}
Using the thesis for $n$, the last member of the equation is
\begin{equation}
\begin{split}
\left(\frac{\partial}{\partial x}+\frac{\Psi_x}{\Psi}\right) \frac{\Psi_{n+1,x}}{\Psi}&=\left(\frac{\partial}{\partial x}+\frac{\Psi_x}{\Psi}\right) \frac{\Psi_{n+1,x}}{\Psi}=\\&=\left(\frac{\Psi_{n+2,x}}{\Psi}-\frac{\Psi_x \Psi_{n+1,x}}{\Psi^2}+\frac{\Psi_x \Psi_{n+1,x}}{\Psi^2}\right)=\frac{\Psi_{n+2,x}}{\Psi},
\end{split}
\end{equation}
which proves our assertion.
\end{proof}

\begin{Lemma}\label{lem:2}
The following identity holds:
\begin{equation}
\frac{\partial}{\partial t}\frac{\Psi_x}{\Psi}=\frac{\partial}{\partial x}\frac{\Psi_t}{\Psi}.
\end{equation}

\end{Lemma}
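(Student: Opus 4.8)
The plan is to prove this purely by direct computation, since the identity is nothing more than a consequence of the equality of mixed partial derivatives. I would first assume that $\Psi=\Psi(t,x)$ is sufficiently regular — at least of class $C^2$ in both variables and nonvanishing on the domain of interest, so that the quotients $\Psi_x/\Psi$ and $\Psi_t/\Psi$ are well-defined and differentiable. These hypotheses are harmless in the present context, because $\Psi$ will ultimately be the solution of a linear (heat-type) equation obtained via Cole--Hopf, hence smooth.

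The core of the argument is to expand both sides using the quotient rule and compare. On the left-hand side I would write
\begin{equation}
\frac{\partial}{\partial t}\frac{\Psi_x}{\Psi}=\frac{\Psi_{xt}\,\Psi-\Psi_x\,\Psi_t}{\Psi^2}=\frac{\Psi_{xt}}{\Psi}-\frac{\Psi_x\,\Psi_t}{\Psi^2},
\end{equation}
while on the right-hand side
\begin{equation}
\frac{\partial}{\partial x}\frac{\Psi_t}{\Psi}=\frac{\Psi_{tx}\,\Psi-\Psi_t\,\Psi_x}{\Psi^2}=\frac{\Psi_{tx}}{\Psi}-\frac{\Psi_t\,\Psi_x}{\Psi^2}.
\end{equation}
The second terms in the two expressions are manifestly identical, so the whole statement reduces to the equality of the first terms, namely $\Psi_{xt}=\Psi_{tx}$.

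At this point I would simply invoke Schwarz's (Clairaut's) theorem on the symmetry of second derivatives, which guarantees $\Psi_{xt}=\Psi_{tx}$ under the $C^2$ assumption stated above. Substituting this equality into the two displayed expansions shows that the left-hand side and the right-hand side coincide, which proves the claim. There is no genuine obstacle here: the only subtlety worth flagging explicitly is the smoothness requirement on $\Psi$ needed to commute the mixed partials, and the fact that $\Psi$ does not vanish so that division by $\Psi$ and $\Psi^2$ is legitimate. Both conditions are automatically satisfied for the functions $\Psi$ arising from the Cole--Hopf transform discussed in this Section, so the identity holds throughout the regime relevant to the Burgers hierarchy.
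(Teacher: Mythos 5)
Your proof is correct and follows essentially the same route as the paper's: expand both sides via the quotient rule, observe that the cross terms $\Psi_x\Psi_t/\Psi^2$ cancel, and reduce the claim to the symmetry of the mixed partials $\Psi_{xt}=\Psi_{tx}$. The only (minor) difference is that you invoke Schwarz's theorem under a $C^2$ hypothesis, whereas the paper assumes analyticity of $\Psi$ --- your regularity requirement is weaker, and your explicit flagging of the nonvanishing of $\Psi$ is a sensible addition, but the argument is the same.
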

\begin{proof}
The proof works by straightforward computation. Indeed
\begin{equation}
\frac{\partial}{\partial t}\frac{\Psi_x}{\Psi}=\frac{\Psi_{x,t}}{\Psi}-\frac{\Psi_x \Psi_t}{\Psi^2},
\end{equation}
while
\begin{eqnarray}
\frac{\partial}{\partial x}\frac{\Psi_t}{\Psi}=\frac{\Psi_{t,x}}{\Psi}-\frac{\Psi_x \Psi_t}{\Psi^2}.
\end{eqnarray}
Assuming that the function $\Psi$ is analytic in $x$ and $t$, then $\Psi_{x,t}= \Psi_{t,x}$, leading to
\begin{equation}
\frac{\partial}{\partial t}\frac{\Psi_x}{\Psi}=\frac{\partial}{\partial x}\frac{\Psi_t}{\Psi}.
\end{equation}
\end{proof}
We are now in position to state the following
\begin{Theorem}
Each element of the Burgers hierarchy can be linearized via Cole-Hopf transform into linear equations.
\end{Theorem}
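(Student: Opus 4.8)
The plan is to exhibit the explicit change of variables and let the two lemmas just proved do all the work in reducing the nonlinear hierarchy \eqref{eq:3.3} to a linear constant-coefficient evolution equation. I would take the Cole-Hopf transform in the form $u(t,x)=\alpha\,\Psi_x/\Psi=\alpha\,\partial_x\log\Psi$, so that the operator appearing in the hierarchy factorizes neatly: since $\alpha\partial_x+u=\alpha\bigl(\partial_x+\Psi_x/\Psi\bigr)$, one has
$$
\Bigl(\alpha\frac{\partial}{\partial x}+u\Bigr)^{n}u=\alpha^{n}\Bigl(\frac{\partial}{\partial x}+\frac{\Psi_x}{\Psi}\Bigr)^{n}\Bigl(\alpha\frac{\Psi_x}{\Psi}\Bigr)=\alpha^{n+1}\Bigl(\frac{\partial}{\partial x}+\frac{\Psi_x}{\Psi}\Bigr)^{n}\frac{\Psi_x}{\Psi}.
$$
Lemma \ref{lem:1} then collapses the right-hand side to $\alpha^{n+1}\Psi_{n+1,x}/\Psi$, converting the entire spatial part of the hierarchy into a single logarithmic-derivative term.

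For the time derivative I would write $u_t=\alpha\,\partial_t(\Psi_x/\Psi)$ and invoke Lemma \ref{lem:2} to recast it as $u_t=\alpha\,\partial_x(\Psi_t/\Psi)$. Substituting both contributions into \eqref{eq:3.3} and dividing by the common factor $\alpha$ produces
$$
\frac{\partial}{\partial x}\Bigl(\frac{\Psi_t+\alpha^{n}\Psi_{n+1,x}}{\Psi}\Bigr)=0,
$$
so that the whole equation has become a total $x$-derivative. Integrating once in $x$ yields $\Psi_t+\alpha^{n}\partial_x^{\,n+1}\Psi=c(t)\,\Psi$ for some function $c(t)$ of $t$ alone — already a linear PDE for $\Psi$.

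The only step requiring genuine care, and the one I expect to be the main (if mild) obstacle, is disposing of the integration ``constant'' $c(t)$. I would remove it using the gauge freedom intrinsic to the transform: replacing $\Psi$ by $\tilde\Psi=\Psi\exp\!\bigl(-\!\int^{t}c(s)\,ds\bigr)$ leaves $u=\alpha\,\tilde\Psi_x/\tilde\Psi$ unchanged, since the multiplicative factor depends on $t$ only, while it simultaneously cancels the source term and leaves the linear equation $\Psi_t+\alpha^{n}\partial_x^{\,n+1}\Psi=0$. This is the $(n+1)$-th order linear equation claimed, reducing for $n=1$ to the (backward) heat equation in agreement with the classical Cole-Hopf linearization of Burgers. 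I would close by noting that the analyticity of $\Psi$ already assumed in Lemma \ref{lem:2} to exchange $\Psi_{xt}$ and $\Psi_{tx}$ is exactly what legitimizes the manipulations above, so no extra hypotheses are needed.
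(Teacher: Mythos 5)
Your proof is correct and follows essentially the same route as the paper's: the same Cole--Hopf substitution $u=\alpha\Psi_x/\Psi$, with Lemma \ref{lem:1} collapsing the spatial operator to $\alpha^{n+1}\Psi_{n+1,x}/\Psi$ and Lemma \ref{lem:2} converting the time derivative, so that the hierarchy becomes a total $x$-derivative. The only deviation is that where the paper simply sets the argument of the derivative to zero (under the assumption $\Psi\neq 0$), you correctly observe that vanishing of the $x$-derivative only forces that argument to equal some $c(t)$, and you dispose of it via the gauge transformation $\tilde\Psi=\Psi\exp\bigl(-\int^{t}c(s)\,ds\bigr)$ --- a mild but genuine tightening of that step which costs nothing, since the multiplicative factor is $x$-independent and leaves $u$ unchanged.
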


\begin{proof}
	First, we perform the Cole-Hopf transform
	\begin{equation}
	u(t,x)=\alpha\frac{\Psi_x}{\Psi}\equiv \alpha (\log \Psi)_x.
	\end{equation}
By plugging it into the Burgers hierarchy \eqref{eq:3.3} we have
\begin{equation}
\frac{\partial}{\partial t}\alpha\frac{\Psi_x}{\Psi}
+\frac{\partial }{\partial x} \left(\alpha\frac{\partial}{\partial x}+\alpha\frac{\Psi_x}{\Psi}\right)^n \alpha\frac{\Psi_x}{\Psi}=\alpha \left(\frac{\partial}{\partial t}\frac{\Psi_x}{\Psi}
+\frac{\partial }{\partial x} \left(\alpha\frac{\partial}{\partial x}+\alpha\frac{\Psi_x}{\Psi}\right)^n \frac{\Psi_x}{\Psi}\right).
\end{equation}
Now, using Lemmas \ref{lem:1} and \ref{lem:2}, we have
\begin{equation}
\frac{\partial u(t,x)}{\partial t}+\frac{\partial}{\partial x}\left( \alpha\frac{\partial}{\partial x}+u(t,x)\right)^{n}u(t,x)=\alpha \frac{\partial }{\partial x}\left(\frac{\Psi_t }{\Psi}+\alpha^n\frac{\Psi _{n+1,x}}{\Psi}\right)=0.
\end{equation}
By setting the argument of the derivative to zero and assuming that $\Psi \neq 0$ for all $x$ and $t$, we finally get the linear equations
\begin{equation}
\label{eq:3.18}
\Psi_t +\alpha^n \Psi_{n+1,x}=0.
\end{equation}
\end{proof}

\section{Guerra's mechanical scheme and relation with the Burgers hierarchy}\label{sec:4}
{This final Section is devoted to prove the connection between the Guerra's interpolated partition function of the $p$-spin ferromagnets and the Burgers hierarchy. Before proceeding, it is worth to mention that relations between PDEs and statistical models have been extensively analyzed in the literature, for instance by Ellis and Newman \cite{newman} and Bogolyubov and co-workers \cite{Bogolyubov,Brankov}. More recently, PDE methods have gained a consistent role in the analysis of statistical spin models, especially for disordered systems, see for example \cite{choquard,Genovese,barrapde1,barrapde2} and references therein.}\par\medskip
Having introduced both the players in the duality, we are now in position to prove it. We will start by defining the generalized quantities in the Guerra's interpolation scheme.

\begin{Definition}
The Guerra's generalized partition function is defined as
\begin{equation}
\label{eq:21}
Z_N(t,x)=\sum_{\bb {\sigma}}\exp\Big(-t Nm_N {(\bb\sigma)}^p+N x m_N{(\bb\sigma)}\Big),
\end{equation}
with associated {Boltzmann factor
$$
B_N(t,x)=\exp\Big(-t Nm_N {(\bb\sigma)}^p+N x m_N{(\bb\sigma)}\Big).
$$
The} intensive statistical pressure {of the model is}
\begin{equation}
\label{eq:22}
A_N (t,x)=\frac1N \log Z_N(t,x).
\end{equation}
\end{Definition}

\begin{Remark}
We interpret the variable $t$ and $x$ respectively as temporal and spatial coordinates in a 1+1-dimensional space. This interpretation will be clear in a moment.
\end{Remark}

{
\begin{Remark}
We stress that the original $p$-spin model \eqref{eq:2} (without external fields)  is recovered with the choice $t=-\beta$ and $x=0$. The inclusion of a (uniform) magnetic field is reproduced by setting $x=h\neq 0$, so that the present framework will still work.
\end{Remark}
}

\begin{Definition}
Given a function {$F(\bb {\sigma})$} of the spins, its expectation value {for the interpolating system} \eqref{eq:21} is defined as
{
\begin{equation}
\label{eq:23}
\omega_{t,x}[F(\bb {\sigma})]=\frac{\sum_{\bb{\sigma}}F(\bb {\sigma})B_N (t,x)}{Z_N (t,x)}.
\end{equation}
}
\end{Definition}

We assume that the function $A_N(t,x)$ is a differentiable function w.r.t. the space-time coordinates. We can therefore compute its derivatives. In particular, we have
\begin{eqnarray}
\label{eq:24}
\frac{\partial A_N (t,x)}{\partial t}&=&-\omega_{t,x} [m_N{(\bb\sigma)}^p],\label{eq:24.1}\\
\frac{\partial A_N (t,x)}{\partial x}&=&\omega_{t,x} [m_N{(\bb\sigma)}].\label{eq:24.2}
\end{eqnarray}
In order to find differential equations, we also need higher spatial derivative of the generalized statistical pressure (or equivalently, derivatives of the magnetization expectation value by virtue of \eqref{eq:24.2}). It is trivial to note that differentiating the expectation value of the global magnetization would generate expectation values of polynomial in the magnetization itself. This is due to the fact that the $x$-derivative should increase of a unity the power. This is clear, for example, by considering the first derivative of the expectation value of the magnetization. Indeed, it is clear that
$$
\frac{\partial \omega_{t,x}[m_N {(\bb\sigma)}]}{\partial x}= N(\omega_{t,x}[m_N {(\bb\sigma)}^2]-\omega_{t,x}[m_N {(\bb\sigma)}]^2).
$$
A similar relation holds for the expectation value of a generic power of the magnetization:
$$
\frac{\partial \omega_{t,x}[m_N {(\bb\sigma)}^q]}{\partial x}= N(\omega_{t,x}[m_N {(\bb\sigma)}^{q+1}]-\omega_{t,x}[m_N {(\bb\sigma)}^q]\omega_{t,x}[m_N {(\bb\sigma)}]).
$$
If we call $u_q(t,x) =\omega_{t,x}[m_N{(\bb\sigma)}^q]$, then we have the following

\begin{Proposition}
The structure of the expectation values of the powers of magnetization is resumed in the following relation
\begin{equation}
\label{eq:25}
\frac{\partial u_q(t,x)}{\partial x}=N(u_{q+1}(t,x)-u_q (t,x)u _1 (t,x)).
\end{equation}
\end{Proposition}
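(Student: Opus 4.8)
The plan is to compute the spatial derivative of $u_q(t,x)=\omega_{t,x}[m_N(\bb\sigma)^q]$ directly from the definition of the interpolated expectation value \eqref{eq:23}, exploiting the fact that the variable $x$ couples linearly to the global magnetization in the Boltzmann factor \eqref{eq:21}. The single structural ingredient needed is the elementary observation that differentiating the Boltzmann weight with respect to $x$ merely brings down a factor $N m_N(\bb\sigma)$, that is $\partial_x B_N(t,x)=N m_N(\bb\sigma)\,B_N(t,x)$, since the exponent is $-tN m_N^p+N x\, m_N$ and only the last term carries the $x$-dependence.

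First I would write $u_q(t,x)$ explicitly as the ratio
$$
u_q(t,x)=\frac{\sum_{\bb\sigma}m_N(\bb\sigma)^q\,B_N(t,x)}{Z_N(t,x)},
$$
and then apply the quotient rule. Using the identity above, the $x$-derivative of the numerator is $N\sum_{\bb\sigma}m_N(\bb\sigma)^{q+1}B_N(t,x)$, while the derivative of $Z_N(t,x)=\sum_{\bb\sigma}B_N(t,x)$ is $N\sum_{\bb\sigma}m_N(\bb\sigma)\,B_N(t,x)$. Collecting the two contributions over the common denominator $Z_N(t,x)^2$ and recognizing each resulting ratio as an interpolated expectation value, the expression organizes itself into
$$
\frac{\partial u_q(t,x)}{\partial x}=N\big(\omega_{t,x}[m_N^{q+1}]-\omega_{t,x}[m_N^q]\,\omega_{t,x}[m_N]\big),
$$
which, upon relabeling via $u_q=\omega_{t,x}[m_N^q]$, is precisely \eqref{eq:25}.

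Conceptually, the right-hand side is nothing but $N$ times the connected correlator (covariance) of $m_N^q$ and $m_N$, so the statement is really a fluctuation identity of the standard Gibbsian type; the $q=1$ formula and the generic-power relation displayed immediately above the Proposition are the $q=1$ instance and a verbal statement of this same computation. There is no genuine obstacle here. The only points requiring care are that the normalization factor $N$ in the exponent of \eqref{eq:21} is exactly what produces the overall $N$ on the right-hand side, and that the subtracted term is the product $u_q\,u_1$ arising from the derivative of the denominator, rather than a higher moment. I therefore expect the entire proof to reduce to a single application of the quotient rule.
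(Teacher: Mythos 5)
Your proof is correct and follows essentially the same route as the paper, which establishes \eqref{eq:25} by the same direct computation (the paper displays the generic-power identity just above the Proposition, obtained by noting that $x$ couples linearly to $N m_N(\bb\sigma)$ in the Boltzmann factor, so differentiation produces the connected correlator). The quotient-rule argument you give, with $\partial_x B_N(t,x)=N m_N(\bb\sigma)B_N(t,x)$, is exactly the intended justification.
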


\begin{Proposition}
The functions $u_q(t,x)$ satisfy the recurrence relation
\begin{equation}
\label{eq:26}
u_{q+1}(t,x)=\frac1N \frac{\partial u_q (t,x)}{\partial x}+u_q(t,x)u_1(t,x).
\end{equation}
\end{Proposition}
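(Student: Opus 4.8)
The plan is to observe that the statement \eqref{eq:26} is nothing more than an algebraic rearrangement of the identity \eqref{eq:25} established in the preceding Proposition. First I would take that identity,
\begin{equation*}
\frac{\partial u_q(t,x)}{\partial x}=N\big(u_{q+1}(t,x)-u_q(t,x)u_1(t,x)\big),
\end{equation*}
divide both sides by $N$, and isolate the term $u_{q+1}(t,x)$. This immediately yields \eqref{eq:26}, so no further input is required.

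If one instead wanted a self-contained derivation bypassing the previous Proposition, I would differentiate $u_q(t,x)=\omega_{t,x}[m_N(\bb\sigma)^q]$ directly with respect to $x$. Using the definition \eqref{eq:23} of the interpolated expectation value together with \eqref{eq:24.2}, the $x$-derivative acting on the Boltzmann factor $B_N(t,x)$ brings down a factor $N\,m_N(\bb\sigma)$, producing the connected correlator $N\big(\omega_{t,x}[m_N^{q+1}]-\omega_{t,x}[m_N^q]\,\omega_{t,x}[m_N]\big)$; a rearrangement then reproduces \eqref{eq:26}. Either route is purely computational, so there is no genuine obstacle here — the content of the Proposition lies entirely in what it enables downstream, not in the difficulty of its proof.

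Indeed, the point of writing the relation in the form \eqref{eq:26} is that it can be read as the action of the operator $\tfrac1N\partial_x+u_1(t,x)$ on $u_q$, namely $u_{q+1}=\big(\tfrac1N\partial_x+u_1\big)u_q$. Iterating from $q=1$ then expresses every $u_q$ as $\big(\tfrac1N\partial_x+u_1\big)^{q-1}u_1$, a differential polynomial in $u_1$ alone. This is precisely the structure appearing inside the Burgers hierarchy \eqref{eq:3.3} with viscosity $\alpha=1/N$, which is what makes \eqref{eq:26} the key bridge toward the duality proved later in this Section.
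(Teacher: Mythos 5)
Your proposal is correct and matches the paper's own proof, which likewise obtains \eqref{eq:26} by a trivial rearrangement of \eqref{eq:25}; the alternative direct differentiation you sketch and the iteration remark are consistent with how the paper then uses the recurrence, but add nothing beyond the paper's argument.
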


\begin{proof}
	The proof of this Proposition follows from a trivial rearrangement of Eq. \eqref{eq:25}.
\end{proof}

{Practically}, we generate the expectation value by repeatedly applying the operator $N^{-1}\partial_x +u_1(t,x)$ on the expectation value of the magnetization $u_1 (t,x)=\omega_{t,x}[m_N {(\bb\sigma)}]$:

\begin{equation}
\label{eq:27}
u_{q+1}(t,x)=\left( \frac1N\frac{\partial}{\partial x}+u_1(t,x)\right)^q u_1 (t,x).
\end{equation}

Now, taking $q+1=p$ and recalling that $u_p(t,x)=-\partial _t A_N(t,x)$, we have

$$
\frac{\partial A_N(t,x)}{\partial t}+\left( \frac1N\frac{\partial}{\partial x}+u_1(t,x)\right)^q u_1 (t,x)=0.
$$
{Further, taking} the $x$-derivative of the entire equation, commuting the derivatives $\partial_t $ and $\partial_x$ acting on $A_N(t,x)$ and recalling that $\partial_x A_N(t,x)=u_1 (t,x)$ (which we simply call $u(t,x)$ for the sake of simplicity), we arrive to state the following

\begin{Theorem}
For each $p\ge 2$, the expectation value of the global magnetization $u(t,x)$ of the $p$-spin model described by the Guerra's generalized partition function \eqref{eq:21} satisfies the equation of the Burgers hierarchy
\begin{equation}
\label{eq:28}
\frac{\partial u(t,x)}{\partial t}+\frac{\partial}{\partial x}\left( \alpha\frac{\partial}{\partial x}+u(t,x)\right)^{p-1}u(t,x)=0,
\end{equation}
where $\alpha=N^{-1}$ is the viscosity parameter.
\end{Theorem}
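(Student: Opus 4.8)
The plan is to assemble the statement directly from the iterated recurrence \eqref{eq:27} together with the first-order relations \eqref{eq:24.1} and \eqref{eq:24.2}, so that almost all of the work has in fact already been done. First I would record that, by \eqref{eq:24.1}, the $t$-derivative of the generalized pressure is $\partial_t A_N(t,x) = -u_p(t,x)$, where $u_p(t,x) = \omega_{t,x}[m_N(\bb\sigma)^p]$. The central algebraic input is then \eqref{eq:27}: setting $q = p-1$ there (which is legitimate precisely because $p\ge 2$, so that $n=p-1\ge 1$ lands inside the hierarchy proper) expresses the $p$-th moment as the $(p-1)$-fold application of the operator $N^{-1}\partial_x + u_1$ to the magnetization, namely $u_p = \left(N^{-1}\partial_x + u_1\right)^{p-1} u_1$. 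Substituting this into $\partial_t A_N = -u_p$ produces the scalar identity $\partial_t A_N + \left(N^{-1}\partial_x + u_1\right)^{p-1} u_1 = 0$ at the level of the pressure itself.

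Next I would differentiate this identity once in $x$. The first term becomes $\partial_x\partial_t A_N$, which I rewrite as $\partial_t\partial_x A_N$ using the assumed joint differentiability of $A_N(t,x)$; by \eqref{eq:24.2} this equals $\partial_t u_1 = \partial_t u$, where $u \equiv u_1 = \partial_x A_N$ is the magnetization expectation value. The second term is already in the desired form $\partial_x\left(N^{-1}\partial_x + u\right)^{p-1} u$. Collecting the two and identifying the viscosity parameter $\alpha = N^{-1}$, I obtain exactly \eqref{eq:28}, i.e. the $n=p-1$ member of the Burgers hierarchy \eqref{eq:3.3}.

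I do not expect a genuine obstacle here, since the proof is essentially a bookkeeping assembly: the real content has been front-loaded into the recurrence \eqref{eq:26}--\eqref{eq:27}, which is what encodes the nonlinear operator structure of the hierarchy and makes the matching with \eqref{eq:3.3} possible. The only points requiring care are (i) correctly tracking the index shift, so that choosing $q=p-1$ yields the $p$-th moment whose $t$-derivative appears in \eqref{eq:24.1}, and (ii) justifying the interchange $\partial_x\partial_t A_N = \partial_t\partial_x A_N$. The latter follows from Schwarz's theorem once one observes that, at finite $N$, the pressure $A_N(t,x)$ is the logarithm of a finite sum of exponentials of polynomials in $t$ and $x$, hence real-analytic in $(t,x)$ wherever $Z_N(t,x)>0$, which holds identically. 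If anything, the conceptually delicate step lives upstream, in establishing \eqref{eq:27}, rather than in the final identification carried out here.
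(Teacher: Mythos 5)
Your proposal is correct and takes essentially the same route as the paper: it substitutes the iterated recurrence \eqref{eq:27} with $q=p-1$ into $\partial_t A_N(t,x)=-u_p(t,x)$ from \eqref{eq:24.1}, differentiates once in $x$, commutes the mixed derivatives, and identifies $u=\partial_x A_N$ with $\alpha=N^{-1}$, exactly as in the derivation preceding the theorem. Your explicit justification of the interchange via Schwarz's theorem (noting that $A_N$ is real-analytic because $Z_N(t,x)$ is a strictly positive finite sum of exponentials of polynomials) is slightly more careful than the paper's bare differentiability assumption, but it is a refinement of the same argument rather than a different one.
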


\begin{Remark}
An alternative route for the proof of the duality is to start with the partition function $Z_N(t,x)$ and compute the space-time derivatives. It is easy to see that
$$
\frac{\partial Z_N(t,x)}{\partial t}+\frac{1}{N^{p-1}} \frac{\partial ^p Z_N(t,x)}{\partial x^p}=0.
$$
In this setup, the relation between the expectation value of the magnetization $\omega_{t,x} (m_N)$ and the partition function $Z_N(t,x)$ is precisely the Cole-Hopf transform, then the duality is easily understood.
\end{Remark}

\subsection{The inviscid limit and gradient catastrophe}

From the point of view the duality with the Burgers hierarchy, the thermodynamic limit corresponds to the inviscid limit of the Burgers hierarchy $\alpha\to0$. Then, the whole class of non-linear equations dramatically simplifies, so that we get
$$
\frac{\partial u(t,x)}{\partial t}+\frac{\partial}{\partial x}u(t,x)^p=\frac{\partial u(t,x)}{\partial t}+ p u(t,x)^{p-1}\frac{\partial u(t,x)}{\partial x}=0.
$$
In order to solve this equation, we also need the initial profile of the solution $u_0(x)=u(t=0,x)$. Again, this is a trivial computation, since the initial profile is a 1-body problem. Indeed, we have
\begin{equation*}
\begin{split}
u_0 (x)&= \frac{\partial}{\partial x}\frac 1N \log Z_N(t=0,x)=\frac{\partial}{\partial x}\frac 1N \log\sum_{\bb \sigma} \exp(N x m_N{(\bb\sigma)})=\\&=
\frac{\partial}{\partial x}\frac 1N \log\sum_{\bb \sigma} \exp\big(x\sum_{i=1}^N\sigma _i \big)=\frac{\partial}{\partial x}\frac 1N \log 2^N \cosh ^N(x)=\tanh (x).
\end{split}
\end{equation*}
Therefore, we can state the following
\begin{Proposition}
The solution of the self-consistency equation in the thermodynamic limit of the model \eqref{eq:21} is given by the solution of the initial value problem
\begin{equation}
\label{eq:29}
\begin{cases}
\frac{\partial u(t,x)}{\partial t}+ p\, u(t,x)^{p-1}\frac{\partial u(t,x)}{\partial x}=0\\
u_0(x)=u(t=0,x)=\tanh(x)
\end{cases}.
\end{equation}
\end{Proposition}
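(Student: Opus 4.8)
The plan is to combine the three facts already established in the lines preceding the statement and then solve the resulting initial value problem explicitly. By the Theorem leading to \eqref{eq:28}, the finite-$N$ magnetization $u(t,x)=\omega_{t,x}[m_N(\bb\sigma)]$ solves the $(p-1)$-th Burgers equation with viscosity $\alpha=N^{-1}$; since the thermodynamic limit $N\to\infty$ is precisely the inviscid limit $\alpha\to0$, the evolution equation collapses to the quasilinear conservation law $u_t+p\,u^{p-1}u_x=0$, while the elementary one-body computation fixes the initial profile $u_0(x)=\tanh(x)$ exactly (at every $N$). It therefore remains only to solve this IVP and to evaluate it at the point of parameter space that reproduces the original model, thereby recovering the self-consistency equation \eqref{eq:20}.

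First I would integrate the conservation law by the method of characteristics. Writing it as $u_t+c(u)u_x=0$ with $c(u)=p\,u^{p-1}$, the solution is constant along each line $x=x_0+p\tanh^{p-1}(x_0)\,t$ issuing from $(0,x_0)$, where it equals $u_0(x_0)=\tanh(x_0)$. Eliminating $x_0$ yields the implicit solution
\begin{equation}
u(t,x)=\tanh\!\big(x-p\,u(t,x)^{p-1}\,t\big).
\end{equation}
Recalling that the original $p$-spin model at inverse temperature $\beta$ is recovered at $(t,x)=(-\beta,0)$, I would then identify $\bar m=\lim_{N\to\infty}\omega_{-\beta,0}[m_N]=u(-\beta,0)$ and substitute these values, which gives $\bar m=\tanh(\beta p\,\bar m^{p-1})$ --- exactly \eqref{eq:20}. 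Thus the roots of the self-consistency equation are in one-to-one correspondence with the characteristics of the IVP passing through $(-\beta,0)$, which is the content of the statement.

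The hard part is not the algebra but the analytic status of the two limits. First, one must justify that the $N\to\infty$ limit of the exact viscous solutions is itself a solution of the inviscid equation: this is the vanishing-viscosity method, controlled here by the same self-averaging and Egorov-type argument already invoked in Proposition \ref{prop:2}. Second, the physical point sits at negative time $t=-\beta$; as $\beta$ increases from zero the characteristics eventually cross at a critical value $\beta_c$ (the gradient catastrophe), beyond which the implicit relation evaluated at $(-\beta,0)$ becomes multivalued. This multivaluedness is the PDE image of the coexistence of several roots of \eqref{eq:20}, and the attendant shock marks the phase transition; deciding which branch is thermodynamically stable --- equivalently, selecting the weak solution produced by the vanishing viscosity --- is the genuinely delicate point, to be settled by requiring that the associated potential $A(t,x)$ reproduce the limiting pressure.
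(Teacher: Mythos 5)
Your proposal follows essentially the same route as the paper: the inviscid limit $\alpha = N^{-1}\to 0$ of the hierarchy equation \eqref{eq:28}, the one-body computation giving $u_0(x)=\tanh(x)$, the implicit characteristic solution $u(t,x)=\tanh\big(x-p\,u^{p-1}t\big)$, and evaluation at $(t,x)=(-\beta,0)$ to recover \eqref{eq:20}. Your closing remarks on vanishing-viscosity justification and branch selection at the gradient catastrophe correctly identify the analytic points the paper handles only via the self-averaging assumption and its subsequent shock-wave discussion, so they are a welcome but not divergent addition.
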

We stress that such a (first order) system describes the motion of traveling waves in $1+1$ dimensions with effective velocity $v(t,x)=p\, u(t,x)^{p-1}$. Then, as standard in this case, we can look for solution in implicit form as $u(t,x)=u_0 (x-v(t,x)t)=\tanh(x-v(t,x)t)$. Now, recalling that the original $p$-spin model is recovered with the choice $x=0$ and $t=-\beta$ and that $u(-\beta,0)=\lim_{N\to\infty} \omega_{-\beta,0} (m_N{(\bb\sigma)})=\bar m$ in the inviscid limit (here, we dropped the dependency on $t$ and $x$, but we again assume the self-averaging property of the order parameter), we easily get the self-consistency equation
\begin{equation}
\label{eq:30}
\bar m = \tanh (\beta p\, \bar m^{p-1}),
\end{equation}
in perfect agreement with previous results \eqref{eq:8} and \eqref{eq:20}.

\begin{Remark}
For each $p$, the model equilbrium dynamics (as resumed in \eqref{eq:30}) undergoes an ergodicity breaking transition. However, this phase transition is of second order (in the standard Erhenfest classification, {\it i.e.} the second derivative of the free energy is discontinuous) only for $p=2$, while for all $p>2$ the phase transition is of first order ({\it i.e.} the first derivative of the free energy is discontinuous).
\end{Remark}

On the Burgers' side, the inviscid limit has the peculiarity of the appearance of the gradient catastrophe and the related development of shock waves. Indeed, it is easy to understand that these two phenomenons ({\it i.e.} gradient catastrophe and ergodicity breaking) are equivalent in this mapping. To understand this, let us analyze the characteristic curves of the system \eqref{eq:29}, given by the system
{
$$
\frac{dt}{1}= \frac{dx}{p u(t,x)^{p-1}},
$$
while $u$ is constant along the characteristic curves, {\it i.e.} $du/dt=0$.}
The characteristic curves are straight lines which can be parametrized in terms of a quantity $\xi$ as $x_\xi (t)=\xi + F(\xi) t$, where $\xi$ clearly is the $x$ value at $t=0$ and $F(\xi)= v(u_0 (\xi))= p \tanh (\xi)^{p-1}$. It is well known that the gradient of the solution of \eqref{eq:29} diverges as $t_c(\xi)=-1/F'(\xi)$, where the characteristic lines start to cross each other.

\begin{Remark}
By inspecting at its plot, we see that, for even $p$, the function $F'(\xi)$ is only positive, meaning that the gradient catastrophe only takes place for $t<0$. This is consistent with our approach, since $t=-\beta$ and $\beta\in \mathbb R^+$. On the other hand, rigorously the solution of \eqref{eq:29} is defined for $t>0$. This is not a problem, since the solution can be analytical continuated for negative values of $t$ just before the shock ({\it i.e.} for $t>t_c$). A sketch of the family of characteristic curves is depicted in Fig. \ref{fig:self_cs}, left panel. 
\end{Remark}
Since we ultimate want to set $x=0$ and $t=-\beta$, we should consider only the characteristics for which the gradient catastrophe holds at $x=0$. In order to ensure this, we have to choose $x_{\bar \xi }(t_c)=0$ for some $\bar \xi$, meaning that $\bar \xi=-F(\bar \xi )t_c$. By using the formula $t_c (\xi)=-1/F'(\xi)$, we have that $\bar \xi= F(\bar \xi)/F'(\bar \xi)$. In other words, with the last formula we find for the values of $\xi$ for which the gradient explodes in the position $x=0$, then with the standard gradient formula we can compute the time at which the solution develops a shock wave. Recalling that $t=-\beta =-\frac1T$, we arrive at the following

\begin{Proposition}
The critical temperature for the ergodicity breaking phase transition can be identified resolving the following system:

\begin{equation}
\label{eq:31}
\begin{cases}
\bar \xi= F(\bar \xi)/F'(\bar \xi)\\
T_c =F'(\bar \xi)
\end{cases},
\end{equation}
where $F(\xi)=p \tanh(\xi)^{p-1}$.
\end{Proposition}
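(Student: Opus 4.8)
The plan is to combine the method of characteristics for the inviscid problem \eqref{eq:29} with the physical identifications $x=0$ and $t=-\beta=-1/T$. First I would recall that, since $u$ is transported unchanged along characteristics, one has $u(t,x_\xi(t))=u_0(\xi)=\tanh(\xi)$ on the straight lines $x_\xi(t)=\xi+F(\xi)t$ with $F(\xi)=p\tanh(\xi)^{p-1}$. The solution is thus given implicitly by the pair of relations $x=\xi+F(\xi)t$ and $u=\tanh(\xi)$, with $\xi$ playing the role of the label of the characteristic through a given point.

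Next I would extract the blow-up time of the spatial gradient. Differentiating the implicit relation $x=\xi+F(\xi)t$ with respect to $x$ at fixed $t$ gives $\xi_x=[1+F'(\xi)t]^{-1}$, whence $u_x=\mathrm{sech}^2(\xi)\,[1+F'(\xi)t]^{-1}$. Since the numerator $u_0'(\xi)=\mathrm{sech}^2(\xi)$ never vanishes for finite $\xi$, the gradient diverges precisely when the denominator does, i.e. at $t_c(\xi)=-1/F'(\xi)$, which recovers the shock-time formula quoted in the text. This is the gradient catastrophe that we identify with the ergodicity-breaking transition.

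The final step is the selection of the relevant characteristic and the passage to temperature. Because the physical $p$-spin model sits at $x=0$, I would impose that the shock form exactly at the origin, $x_{\bar\xi}(t_c)=0$, which reads $\bar\xi+F(\bar\xi)t_c=0$, i.e. $\bar\xi=-F(\bar\xi)t_c$. Substituting $t_c=-1/F'(\bar\xi)$ eliminates $t_c$ and yields the first equation $\bar\xi=F(\bar\xi)/F'(\bar\xi)$. Translating the critical time through $t=-\beta=-1/T$, the condition $t_c=-1/F'(\bar\xi)$ becomes $-1/T_c=-1/F'(\bar\xi)$, that is $T_c=F'(\bar\xi)$, which is the second equation of \eqref{eq:31}.

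The main obstacle I expect is not the algebra, which is short once the characteristics are in hand, but the justification of the selection step together with the sign bookkeeping. One must argue that constraining the shock to the physical point $x=0$ singles out the correct branch $\bar\xi$, and that it corresponds to the earliest (hence physically observed) singularity rather than a spurious later crossing of characteristics; and one must handle the analytic continuation to $t<0$ noted in the preceding Remark, since the characteristic solution is naturally posed for $t>0$ whereas the thermodynamically relevant regime is $t=-\beta<0$.
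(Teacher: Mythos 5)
Your proposal is correct and follows essentially the same route as the paper: characteristics $x_\xi(t)=\xi+F(\xi)t$ with $F(\xi)=p\tanh(\xi)^{p-1}$, the shock time $t_c(\xi)=-1/F'(\xi)$, the selection condition $x_{\bar\xi}(t_c)=0$ giving $\bar\xi=F(\bar\xi)/F'(\bar\xi)$, and the identification $t=-\beta=-1/T$ yielding $T_c=F'(\bar\xi)$. If anything, you are slightly more detailed than the paper, which simply quotes the standard blow-up formula where you derive $u_x=\mathrm{sech}^2(\xi)\,[1+F'(\xi)t]^{-1}$ by implicit differentiation, and you correctly flag the sign/continuation issues (even $p$ forcing $F'>0$ and hence $t_c<0$) that the paper relegates to the preceding Remark.
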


The results are reported in Fig. \ref{fig:self_cs}, right panel.
\begin{figure}[t]
	\begin{minipage}{0.52\linewidth}	
		\includegraphics[width=\textwidth]{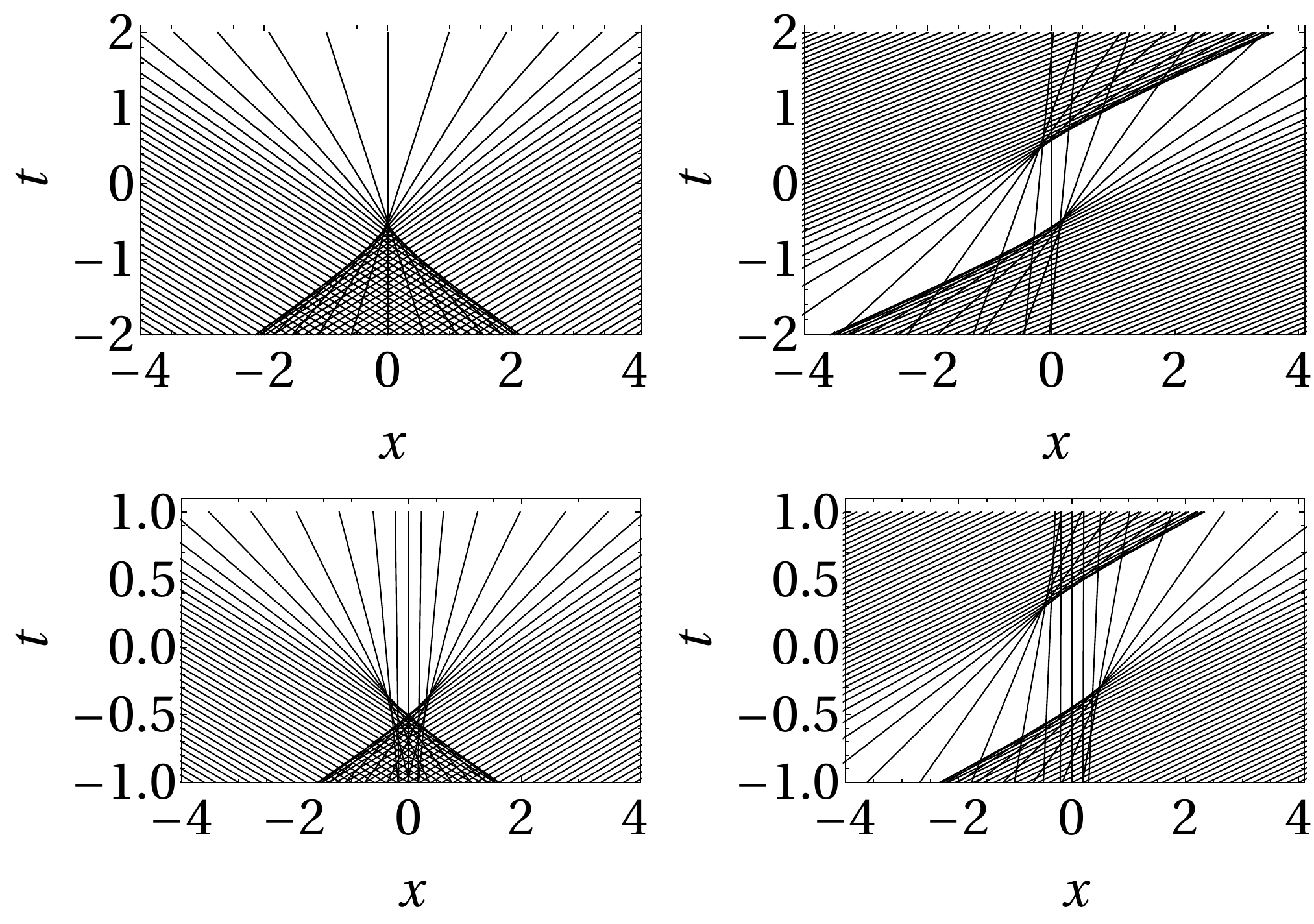}
	\end{minipage}
	\hspace{0.1cm}
	\begin{minipage}{0.45\linewidth}	
		\includegraphics[width=\textwidth]{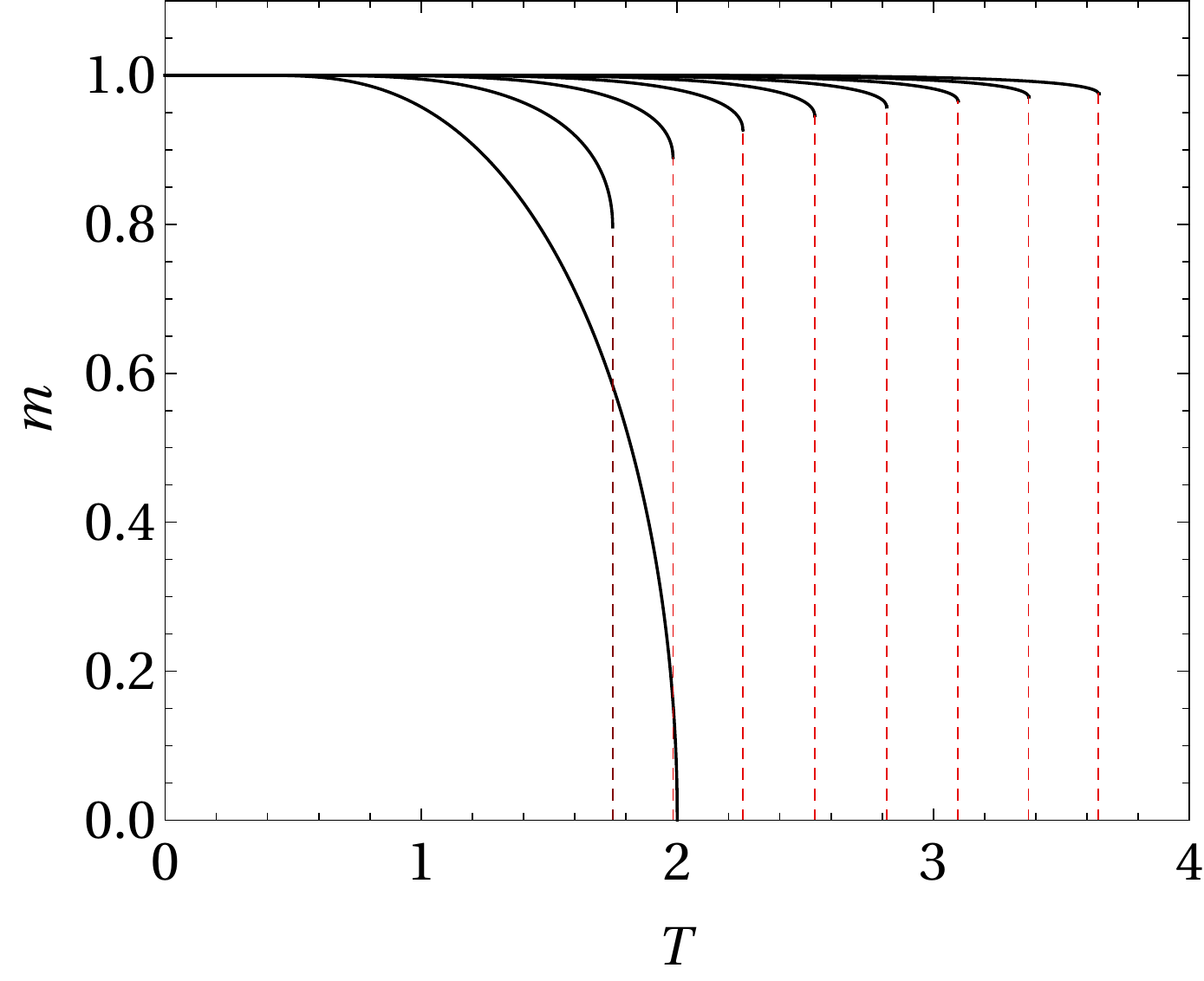}
		\vspace{-0.85cm}
	\end{minipage}
	\caption{(Left panel). Representation of characteristic curves for the system \eqref{eq:29}. In particular, we have $p=2$ (upper left plot), $p=3$ (upper right plot), $p=4$ (lower left plot) and $p=5$ (lower right plot). We see the different patterns of the characteristics curve for even $p$ (mutual crossing of characteristics only for $t<0$) and for odd $p$ (mutual crossing of characteristics both for $t<0$ and $t>0$). Of course, only the $t<0$ gradients catastrophes are relevant in this scenario. (Right panel). Numerical solution (black solid lines) of the self-consistency equation \eqref{eq:30} for $p=2,3,4,\dots,10$ (going from the left to the right). The red dashed vertical lines represent the critical temperature for ergodicity breaking phase transition as predicted by the system \eqref{eq:31}. Numerical results and theoretical predictions are in perfect agreement.}
	\label{fig:self_cs}
\end{figure}

\subsection{Burgers hierarchy solution from $p$-spin thermodynamics}

In this Section, we will take benefit of the duality in order to find explicit solutions of the Burgers hierarchy with non-vanishing viscosity $\alpha$ and initial profile $u_0(x)=\tanh x$. To this aim, we should find explicit expressions for the expectation value of the global magnetization at finite $N$. In this case, the global magnetization $m_N{(\bb\sigma)}$ can only take discrete values. To understand this, we can start with a system configuration in which $\sigma _i =1$ for all $i=1,\dots,N$, whose corresponding magnetization is trivially $m_N{(\bb\sigma)}=1$. All of the other values can be obtained by progressively flipping all of the spins until the lower bound $m_N{(\bb\sigma)}=-1$ is reached (corresponding to a situation in which $\sigma_i =-1$ for all $i=1,\dots,N$). Every time we flip a spin, there will be a net difference in the value of the magnetization whose magnitude is $2/N$. Therefore, the possible values of the magnetization are
\begin{equation}
m_N \in\Big\{ 1, \frac{N-2}{N}, \frac{N-4}{N},\dots,-\frac{N-4}{N},- \frac{N-2}{N},-1\Big\},
\end{equation}
or in compact form $ m_{N,k}=\frac1N(N-2k)$ for $k=0,\dots,N$. Each possible value of the magnetization has a degeneracy given by
$$
\text{Deg}(m_{N,k})= \binom{N}{k}.
$$
With these ingredients, we can write the generalized partition function \eqref{eq:21} at finite $N$ as
\begin{equation}
\label{eq:4.14}
Z_N(t,x)= \sum_{k=0}^N\binom{N}{k}\exp\Big(-\frac{t}{N^{p-1}}(N-2k)^p+x(N-2k)\Big).
\end{equation}
Now, using the basic relation $\omega_{t,x}[m_N {(\bb\sigma)}]=\frac1N \partial_x \log Z_N(t,x)$ and the duality result $u(t,x)=\omega_{t,x}[m_N {(\bb\sigma)}]$, we have
\begin{equation}
\label{eq:4.15}
u_{N,p}(t,x)=\frac1N \frac{\sum_{k=0}^N\binom{N}{k}(N-2k)\exp\Big(-\frac{t}{N^{p-1}}(N-2k)^p+x(N-2k)\Big)}{\sum_{k=0}^N\binom{N}{k}\exp\Big(-\frac{t}{N^{p-1}}(N-2k)^p+x(N-2k)\Big)}.
\end{equation}
In the last equation, we used the subscripts $p$ to distinguish between different elements of the Burgers hierarchy and $N$ to stress that the corresponding viscosity parameter is fixed as $\alpha=1/N$. {As is clear from \eqref{eq:4.15}, the solutions we can build by explicit use of the duality are rational functions in which both the numerator and denominator are linear combinations (with coefficients not depending on $x$ and $t$) of exponential waves of the form $e^{Ax+Bt}$. Other solutions sharing this structure can be found in \cite{KUDRYASHOV20091293}.} To conclude this analysis, we provide some specific examples of solutions of the initial value problem for fixed $N$ and $p$.\par\medskip
For $p=N=2$, the function
\begin{equation*}
u_{2,2}(t,x)=\frac{-1 +e^{4x}}{1+e^{4x}+2 e^{2t+2x}},
\end{equation*}
is solution of the initial value problem
\begin{equation*}
\begin{cases}
u_t+2 u u_x +\frac12 u_{xx}=0\\
u_0(x)=u(t=0,x)=\tanh(x)
\end{cases}.
\end{equation*}
For $p=2$, $N=3$, the function
\begin{equation*}
u_{2,3}(t,x)=\frac{-1 +e^{6x}-e^{\frac83 t+2x}+e^{\frac83t+4x}}{ 1+e^{6x}+3 e^{\frac83 t+2x}+3 e^{\frac83 t+4x}},
\end{equation*}
is solution of the initial value problem
\begin{equation*}
\begin{cases}
u_t+2 u u_x +\frac13 u_{xx}=0\\
u_0(x)=u(t=0,x)=\tanh(x)
\end{cases}.
\end{equation*}
For $p=2$, $N=4$, the function
\begin{equation*}
u_{2,4}(t,x)=\frac{-1 +e^{8x}-2e^{3t+2x}+2 e^{3t+6x} }{ 1 +e^{8x}+4 e^{3t+2x}+6 e^{4t +4x}+4 e^{3t+6x}  },
\end{equation*}
is solution of the initial value problem
\begin{equation*}
\begin{cases}
u_t+2 u u_x +\frac14 u_{xx}=0\\
u_0(x)=u(t=0,x)=\tanh(x)
\end{cases}.
\end{equation*}
We can also vary the value of the order $p$ of the interactions. Indeed, for $p=3$ and $N=2$, the function
\begin{equation*}
u_{3,2}(t,x)=\frac{  -e^{2t}+e^{2x}  }{   e^{2t}+e^{2x}   },
\end{equation*}
is solution of the initial value problem
\begin{equation*}
\begin{cases}
u_t+ 3u^2 u_x +\frac32 u_x^2+\frac 32 uu_{xx}+\frac14 u_{xxx}=0\\
u_0(x)=u(t=0,x)=\tanh(x)
\end{cases}.
\end{equation*}
Finally, for $p=4$ and $N=2$, the function
\begin{equation*}
u_{4,2}(t,x)=\frac{-1 +e^{4x} }{ 1+e^{4x}+2e^{2t+2x}},
\end{equation*}
is solution of the initial value problem
\begin{equation*}
\begin{cases}
u_t+ 	4u^3 u_x +6 u u_x^2 +3 u^2 u_{xx}+\frac52 u_x u_{xx}+u u_{xxx}+\frac18 u_{xxxx}=0\\
u_0(x)=u(t=0,x)=\tanh(x)
\end{cases}.
\end{equation*}

\subsection{{A representation for} finite-size $p=2$ solution {through} Burgers duality}

As mentioned above, the interesting feature in the duality between $p$-spin ferromagnets and the Burgers' hierarchy is that the analysis of the statistical model can be reduced to the study of solutions of linear PDEs for finite $N$ by means of the Cole-Hopf transform. 
{By reversing the duality, we can give a representation of the finite size solution of such spin systems by means of purely PDE methods. In particular, for $p=2$ we can take advantage of the heat-kernel technology to find a description of the Curie-Weiss model by deriving an {\it effective} self-consistency equation for the order parameter. However, we stress that, for these simple systems, the Burgers duality route is not needed, as the solution is easily found even at finite size $N$. Despite this, it is worth to analyze how the whole connection works in both senses.}
\par\medskip
{Recall that the Guerra's generalized partition function is}
\begin{equation}
\label{eq:4.13}
Z_N (t,x)=\sum_{\bb {\sigma}}\exp\Big(-t Nm_N {(\bb\sigma)}^2+N x m_N{(\bb\sigma)}\Big),
\end{equation}
with associated free energy
\begin{equation}
A_N(t,x)= \frac 1N \log Z_N(t,x).
\end{equation}
The spatial derivative $u(t,x)=\partial _x A_N(t,x)$ satisfies the Burgers equation
\begin{equation}
u_t +2 uu_x +\frac1N u_{xx}=0.
\end{equation}
The solution of (interpolated) Curie-Weiss model is equivalent to search the solution of the Burgers equation with initial profile $u_0(x)=u(t=0,x)=\tanh x$. Using the Cole-Hopf transformation $u(t,x)=\frac1N (\log \Psi)_x$, the problem is reduced to the heat equation
\begin{equation}
\label{eq:4.16}
\Psi_t +\frac1N \Psi_{xx}=0.
\end{equation}
By using the previous definitions, we can identify $A_N(t,x)=\frac1N \log \Psi(t,x)$, so that the $\Psi$ function is nothing but the Guerra's generalized partition function \eqref{eq:4.13}.
\par\medskip
The heat equation can be solved via the heat kernel technology, so that the general solution is given by
\begin{equation}
\label{eq:4.17}
\Psi(t,x)= \int dy \, \Psi_0(y) K(t,x-y),
\end{equation}
where $\Psi_0$ is the initial profile of the Cauchy problem, and
\begin{equation}
K(t,x)=\sqrt{-\frac{N}{4\pi t}}\exp\Big(\frac{N x^2}{4 t}\Big).
\end{equation}
\begin{Remark}
We stress that solutions of the form \eqref{eq:4.17} are well-defined for $t<0$, due to the ``wrong'' sign of the temporal derivative in the heat equation \eqref{eq:4.16}. However, this is coherent with our setup, since the link with the thermodynamic model is achieved with $t=-\beta$ with $\beta \in \mathbb R_+$.
\end{Remark}
Now, since $u(t,x)=\frac1N (\log \Psi)_x$ and the initial profile of the solution of Burgers equation is $u_0 (x)= \tanh x$, we immediately have that $\Psi_0 (x)=\cosh ^N (x)$, so that, according to the duality, we have
\begin{equation}
\label{eq:4.19}
A_N (t,x)=\frac1N\log \sqrt{-\frac{N}{4\pi t}} \int dy \cosh^N (y)\exp\Big(\frac{N (x-y)^2}{4 t}\Big),
\end{equation}
and then the argument of the logarithm is an integral representation of the partition function.
In order to make contact with the thermodynamic picture of the Curie-Weiss ferromagnet, we should match the order parameter in terms of the relevant variable on the Burgers side. To do this, we use the fact that $\omega_{t,x} (m_N)=\partial _x A_N (t,x)$, so we take the spatial derivative of the free energy \eqref{eq:4.19}. Then, we have
\begin{equation}
\partial _x A_N(t,x)=\frac{1}{2t} \frac{\int dy { (x-y)} \cosh^N (y)\exp\Big(\frac{N (x-y)^2}{4 t}\Big)}{\int dy \cosh^N (y)\exp\Big(\frac{N (x-y)^2}{4 t}\Big)}.
\end{equation}
By performing the transformation $y=-2t \bar y +x$, we thus have
\begin{equation}
\label{eq:4.21}
\omega_{t,x}(m_N)=\frac{\int d\bar y\ { \bar y} \cosh^N (-2t \bar y +x)\exp({N t \bar y^2})}{\int d\bar y  \cosh^N (-2t \bar y +x)\exp({N t \bar y^2})}.
\end{equation}
Then, it is proved the following 
\begin{Proposition}
The (finite-size) expectation value of the global magnetization for the (interpolated) Curie-Weiss model is equivalent to the first moment of the random variable $\bar y$ distributed according to the probability distribution
\begin{equation}
\label{eq:4.22}
P_{t,x}(\bar y)=C \cosh^N (-2t \bar y +x)\exp({N t \bar y^2}), \quad t<0,
\end{equation}
where $C$ is a normalization constant.
\end{Proposition}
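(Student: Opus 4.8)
The plan is to read the claim directly off the integral representation \eqref{eq:4.21}, which has already been established via the Cole-Hopf linearization and the heat-kernel solution of the associated heat equation. The essential observation is that the right-hand side of \eqref{eq:4.21} is a ratio of two integrals sharing a common weight. Writing $g(\bar y)=\cosh^N(-2t\bar y+x)\exp(Nt\bar y^2)$, the numerator is $\int d\bar y\,\bar y\,g(\bar y)$ and the denominator is $\int d\bar y\,g(\bar y)$, so the whole expression is manifestly the normalized first moment of a density proportional to $g$.

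First I would define the normalization constant $C=\big(\int d\bar y\,g(\bar y)\big)^{-1}$ and set $P_{t,x}(\bar y)=C\,g(\bar y)$, so that $P_{t,x}$ is exactly the distribution in the statement. With this definition the denominator in \eqref{eq:4.21} equals $C^{-1}$, and hence
\[
\omega_{t,x}(m_N)=C\int d\bar y\,\bar y\,g(\bar y)=\int d\bar y\,\bar y\,P_{t,x}(\bar y),
\]
which is by definition the first moment of the random variable $\bar y\sim P_{t,x}$. This settles the identification, provided $P_{t,x}$ is genuinely a probability density.

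The only substantive point to verify, and the step I expect to require the most care, is the well-definedness of $P_{t,x}$, namely convergence and positivity of the normalizing integral. Positivity is immediate, since both $\cosh^N(-2t\bar y+x)$ and the exponential factor are strictly positive. For convergence I would compare the two competing exponential behaviors as $|\bar y|\to\infty$: the factor $\cosh^N(-2t\bar y+x)$ grows at most like $e^{2N|t|\,|\bar y|}$, whereas the Gaussian factor $\exp(Nt\bar y^2)=\exp(-N|t|\bar y^2)$ decays superexponentially. Here the hypothesis $t<0$ is essential, as it is precisely what gives the negative sign in the quadratic exponent. Since the quadratic decay dominates the linear growth, the integrand is integrable, so $C$ is finite and strictly positive and $P_{t,x}$ is a legitimate probability distribution, completing the argument.
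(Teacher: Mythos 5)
Your proposal is correct and follows essentially the same route as the paper: the Proposition is read directly off the representation \eqref{eq:4.21}, obtained there via the heat-kernel solution and the change of variables $y=-2t\bar y+x$, by recognizing the ratio of integrals as the normalized first moment of the density $P_{t,x}$. Your explicit verification that $P_{t,x}$ is a legitimate probability density for $t<0$ (positivity plus Gaussian decay dominating the at most exponential growth of $\cosh^N$) is a detail the paper leaves implicit, but it does not change the argument.
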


\begin{Remark}
Since the probability distribution \eqref{eq:4.22} is the same associated to the partition function (which in this setup is represented by the $\Psi$ function), we can exactly identify the global magnetization with the random variable $\bar y$.
\end{Remark}

In terms of the $\bar y$ variable, the intensive pressure is \eqref{eq:4.19} is
\begin{equation}
\label{eq:4.23}
A_N (t,x)=\frac1N\log \Big[-t\sqrt{-\frac{N}{\pi t}} \int d\bar y \cosh^N (-2t \bar y+x)\exp({N t \bar y^2}) \Big].
\end{equation}
We can now take the spatial derivative of the intensive pressure. To do this, we consider the identity
$$
\partial _x \cosh^N (-2t \bar y+x)=N\cosh^N (-2t \bar y+x) \tanh (-2t \bar y+x).
$$
Thus, we have
\begin{equation}
\label{eq:4.24}
\omega_{t,x}(m_N)\equiv\partial_x A_N (t,x)=\frac{\int d\bar y \cosh^N (-2t \bar y+x)\tanh(-2t \bar y +x)\exp({N t \bar y^2})}{\int d\bar y \cosh^N (-2t \bar y+x)\exp({N t \bar y^2})} .
\end{equation}

By denoting with $\langle \cdot \rangle_{t,x}$ the average with respect to the probability distribution $P_{t,x}(\bar y)$ and comparing Eqs. \eqref{eq:4.21} and \eqref{eq:4.24}, we arrive at the equality
\begin{equation}
\label{eq:4.28}
\langle \bar y \rangle_{t,x} =\langle  \tanh(-2t \bar y +x) \rangle_{t,x} .
\end{equation}

\begin{Remark}
By setting $t=-\beta$ and $x=0$, we have formally similar self-consistency equations w.r.t. Eq. \eqref{eq:20}, i.e.
$$
\langle \bar y \rangle =\langle  \tanh(2\beta \bar y ) \rangle, 
$$
where $\langle \cdot \rangle= \langle \cdot \rangle_{-\beta,0}$. The substantial difference lies in the fact that in this case the average is performed at finite $N$, so that the r.h.s. is not expressed as a function of the expectation value of the magnetization, since the probability distribution is smooth (or equivalently, it is not peaked on the equilibrium value of the magnetization). Indeed, we checked that the probability distribution \eqref{eq:4.22} at $x=0$ and $t=-\beta$ exhibits the behaviour we expect from the Curie-Weiss picture, see Fig. \ref{fig:pdfs}. In particular, below the critical temperature, the probability distribution displays two different peaks (which are related by the symmetry transformation $y\to -y$) becoming sharper and sharper as $N$ increase, ultimately tending to two Dirac deltas. Above the critical temperature, there is a single peak centered at $y=0$, mimicking the fact that the system has lost its ferromagnetic behavior.
\end{Remark}

The comparison between the explicit solution \eqref{eq:4.15} and the prediction of \eqref{eq:4.21} (for which we checked the equality \eqref{eq:4.28}), is reported in Fig. \ref{fig:numeric}.
	
\begin{figure}[t]
	\centering
	\includegraphics[width=\textwidth]{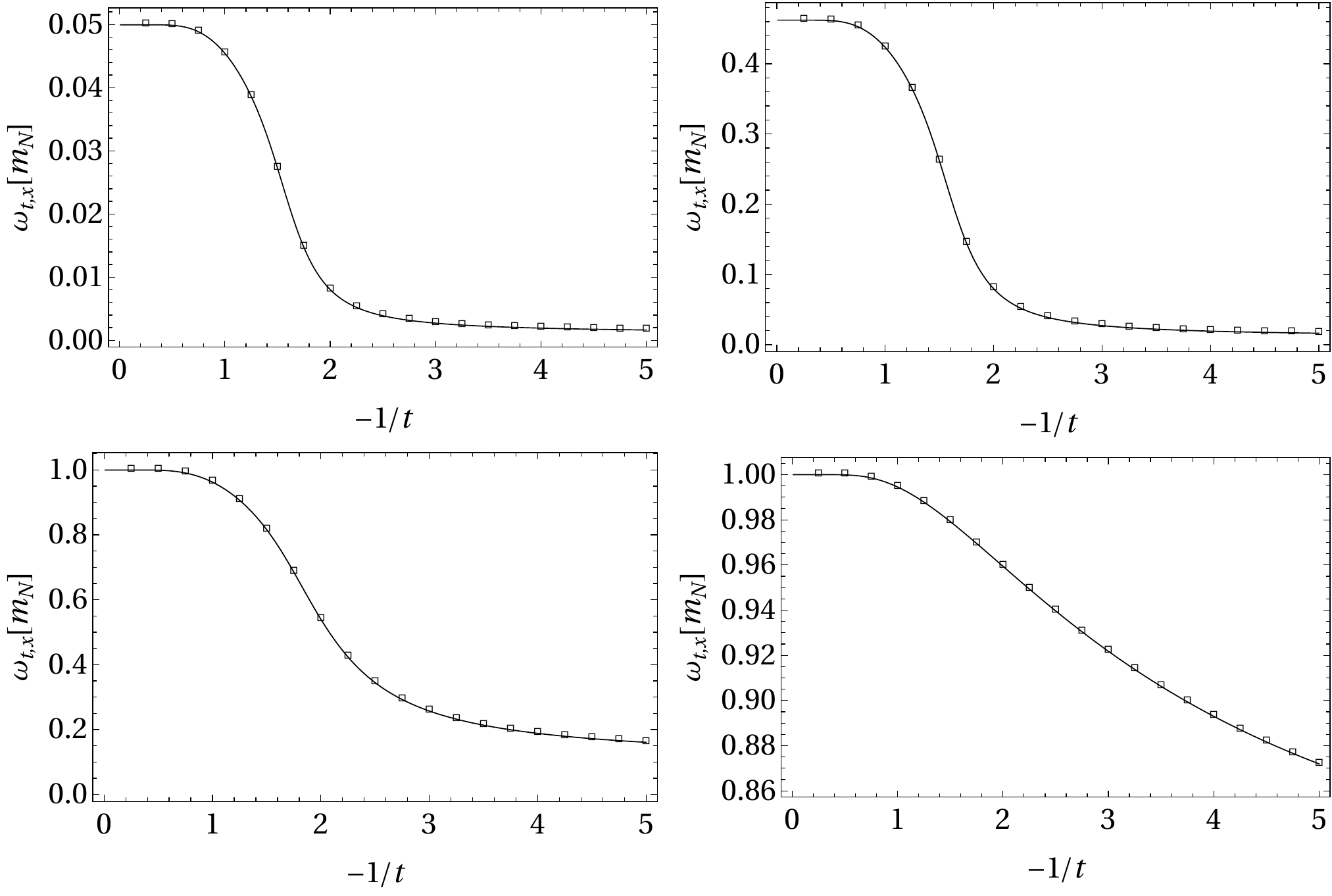}
	\caption{Comparison between the explicit solution \eqref{eq:4.15} (solid curves) and the prediction of \eqref{eq:4.21} (empty squares) for $x=0.001$ (upper left plot), $x=0.01$ (upper right plot), $x=0.1$ (lower left plot) and $x=1$ (lower right plot). In all cases, we fixed the size of the system to $N=50$. The results from the two sides of the duality are in perfect agreement.}\label{fig:numeric}
\end{figure}	
\begin{figure}[t]
\centering
\includegraphics[width=\textwidth]{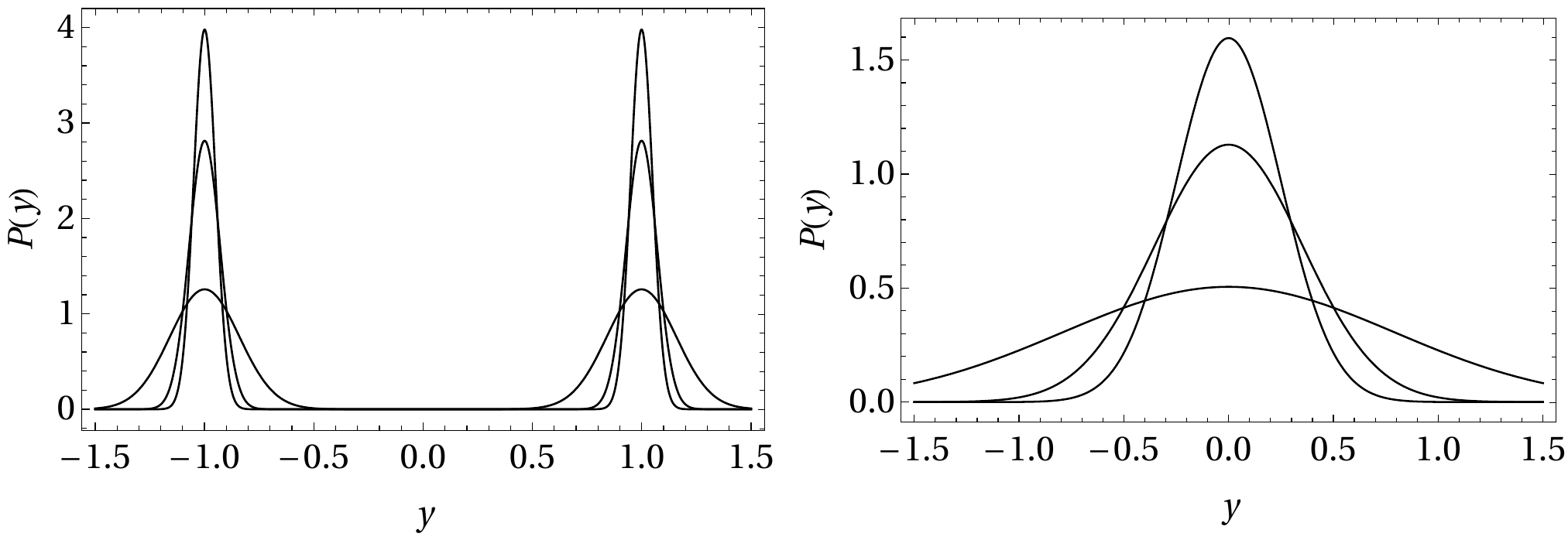}
\caption{Example of probability distributions profile for different values of $\beta=1/T=2$ (left plot, below the critical temperature) and $\beta= 1/T=0.1$ (above the critical temperature). For each plot, we reported different values of $N$: in particular, the three curves correspond (from the broadest to the sharpest) to $N=10$, $N=50$ and $N=100$.}\label{fig:pdfs}
\end{figure}

\section{Conclusions {and further developments}}
Dualities are always powerful tools in mathematical and theoretical physics, since they allow a two-faced investigation of apparently unrelated fields with new mathematical tools, often making easier to derive non-trivial results. In this paper, we examined the relation between the equilibrium dynamics of the $p$-spin ferromagnetic models and the Burgers hierarchy; in particular, the expectation value of the order parameter in the first side is identified with the solution of the initial value profile of the Burgers hierarchy. We also present some examples of application of the duality on both sides. 
{The methods here developed can in principle be applied to other spin models. Indeed, particularly interesting extensions of the present work would be the application of the PDE-statistical mechanics duality to $p$-spin systems with  random external field \cite{lowe,Amaro}, as well as to diluted and finite-connected ferromagnets \cite{dilutedpspin,Franz,Mozeika1,Mozeika}. Even more interesting would be the application of PDE-statistcal mechanics duality to disordered system ({\it i.e.} spin glass) models with interactions of order $p$. In this case, as is easily understood, the situation is much harder than the simple $p$-spin ferromagnets, due to the intrinsic complexity of such systems. Since the structure of pure state is strongly dependent on the realization of the internal disorder and the system size, the structure of Guerra's generalized partition function has to be consistently adapted. For the Sherrington-Kirkpatrick (SK) model, it reads \cite{guerraton3}
$$
Z_N (t,x)=\sum_{\bb \sigma} \exp\Big(\sqrt t \frac{1}{\sqrt{N}}\sum_{1\le i<j\le N}J_{ij}\sigma_i \sigma_j +\sqrt x \sum_{i=1}^N \bar J_i \sigma_i\Big),
$$
and the associated intensive pressure is defined as $A_N (t,x)=\frac1N\mathbb E \log Z_N(t,x)$. Here, both the couplings $J_{ij}$ and the effective external fields $\bar J_i$ are normally distributed, {\it i.e.} $J_{ij}, \bar J_i \sim \mathcal N(0,1)$ for all $i,j=1,\dots,N$, and $\mathbb E$ stands for the average w.r.t. these random variables. The presence of the square root of the ``space-time'' coordinates $t$ and $x$ is motivated by an extensive use of the Wick theorem in the computations, so that the derivatives of the intensive pressure does not explicitly depend on $t$ and $x$. However, for spin-glass models the present methods can only be useful in the thermodynamic limit. This is due to the fact that generally a comprehensive description of spin-glass models is possible in the limit $N\to \infty$, where it is possible to use rigorous methods (such as TLCs theorems or concentration inequalities \cite{guerra3,Tala0}) to achieve the thermodynamic solution. On the technical side, even when working at a replica symmetric level, the mapping of the spin-glass model to mechanical systems explicitly depends on the fluctuation of the order parameter (the replica overlap) w.r.t. its thermodynamic value (see for example \cite{linda} for transport-like PDEs for the SK and Hopfield models), thus in general we can take benefit of the PDE-statistical mechanics duality only in the thermodynamic limit (where fluctuations vanish).
Further, replica symmetry breaking within the Guerra's interpolating framework can be conveniently addressed enriching the interpolation structure. In particular, for the SK model, the $K$-step RSB equilibrium dynamics is achieved in terms of the generalized partition function
$$
Z_K (t,\bb x)=\sum_{\bb \sigma} \exp\Big(\sqrt t \frac{1}{\sqrt{N}}\sum_{1\le i<j\le N}J_{ij}\sigma_i \sigma_j +\sum_{a=1}^K \sqrt {x^{(a)}} \sum_{i=1}^N \bar J_i^{(a)} \sigma_i\Big),
$$
where now $\bb x= (x^{(1)},x^{(2)},\dots , x^{(K)})$ and again $J_{ij}, \bar J_i ^{(a)} \sim \mathcal N(0,1)$ for all $i,j=1,\dots,N$ and $a=1,\dots,K$, see \cite{Barra-Guerra-HJ}. The intensive pressure of the model is now obtained as $A_N (t,x)= \frac1N \mathbb E_0 \log Z_0 (t,x)$, where $Z_{k-1}(t,\bb x)^{\theta_k}=\mathbb E_k (Z_k(t,\bb x)^{\theta_k} )$ for each $k=1,\dots, K$. Here, $\mathbb E_k$ is the expectation value w.r.t. the random variables $\bar J^{(k)}_i$, while $\mathbb E_0$ is the average w.r.t. the coupling realizations $J_{ij}$. Finally, the $\theta _k$ parameters quantify the intensity of each peak in the $N\to\infty$ $K$-RSB overlap distribution \cite{Barra-Guerra-HJ,linda}. This implies that the $K$-RSB approximation of spin-glass equilibrium dynamics can in principle be mapped to $K+1$-dimensional mechanical systems. Also, the application of the whole technology to equilibrium dynamics of neural networks would be highly desirable. To conclude, we strongly believe that PDE-statistical mechanics duality can be, at least in the thermodynamic limit, a very useful tool to investigate properties at the equilibrium of spin-glass models for each interaction order $p$ \cite{Gardner,panchenko4,crisanti_pspin,Kirkpatrick,Kirkpatrick1,Cugliandolo1,Cugliandolo2}. We leave this discussion open for future works.
}

\section{Acknowledgments}
The Author is grateful to Unisalento and Istituto Nazionale di Fisica Nucleare (INFN, Sezione di Lecce) for partial funding. The Author is also very thankful to A. Barra and F. Alemanno for useful discussions and suggestions.

\end{document}